\DeclarePairedDelimiter\floor{\lfloor}{\rfloor}
\newtheorem{theorem}{Theorem}
\newtheorem{lemma}{Lemma}
\newtheorem{proposition}{Proposition}
\newtheorem*{problem}{Problem}
\theoremstyle{definition}
\newtheorem*{example}{Example}
\newtheorem{definition}{Definition}
\theoremstyle{remark}
\newtheorem{remark}{Remark}
\setlist[description]{style=multiline}
\begin{document}
\sloppy

\title{Communication-Aware Scheduling of Serial Tasks for Dispersed Computing }

\author{Chien-Sheng Yang,~\IEEEmembership{Student Member,~IEEE}, Ramtin Pedarsani,~\IEEEmembership{Member,~IEEE},\\ and A. Salman Avestimehr,~\IEEEmembership{Senior Member,~IEEE}
\thanks{This material is based upon work supported by Defense Advanced Research Projects Agency (DARPA) under Contract No. HR001117C0053, ARO award W911NF1810400, NSF grants CCF-1703575, CCF-1763673, NeTS-1419632, ONR Award No. N00014-16-1-2189, and the UC Office of President under grant No. LFR-18-548175. The views, opinions, and/or findings expressed are those of the author(s) and should not be interpreted as representing the official views or policies of the Department of Defense or the U.S. A part of this paper was presented in IEEE ISIT 2018 \cite{yang2018communication}.}
\thanks{C.-S.~Yang and A.~S.~Avestimehr are with the Department of Electrical and Computer Engineering, University of Southern California, Los Angeles, CA 90089 USA (e-mail: chienshy@usc.edu; avestimehr@ee.usc.edu).}%
\thanks{R.~Pedarsani is with the Department of Electrical and Computer
Engineering, University of California at Santa Barbara, Santa Barbara,
CA 93106, USA (e-mail: ramtin@ece.ucsb.edu).}}

% The paper headers
%\markboth{IEEE/ACM TRANSACTIONS ON NETWORKING}
%{Yang \MakeLowercase{\textit{et al.}}: Communication-Aware Scheduling of Serial Tasks for Dispersed Computing}

\maketitle
\begin{abstract}
There is a growing interest in the development of in-network \emph{dispersed computing} paradigms that leverage the computing capabilities of heterogeneous resources dispersed across the network for processing massive amount of data collected at the edge of the network.
We consider the problem of task scheduling for such networks, in a dynamic setting in which arriving computation jobs are modeled as chains, with nodes representing tasks, and edges representing precedence constraints among tasks. In our proposed model, motivated by significant communication costs in dispersed computing environments, the communication times are taken into account. More specifically, we consider a network where servers can serve all task types, and sending the outputs of processed  tasks  from  one  server  to  another  server  results  in some communication delay. We first characterize the capacity region of the network, then propose a novel virtual queueing network encoding the state of the network. Finally, we propose a Max-Weight type scheduling policy, and considering the stochastic network in the fluid limit, we use a Lyapunov argument to show that the policy is throughput-optimal. 
Beyond the model of chains, we extend the scheduling problem to the model of directed acyclic graph (DAG) which imposes a new challenge, namely \textit{logic dependency} difficulty, requiring the data of processed parents tasks to be sent to the same server for processing the child task. We propose a virtual queueing network for DAG scheduling over broadcast networks, where servers always broadcast the data of processed tasks to other servers, and prove that Max-Weight policy is throughput-optimal.

\end{abstract}
\begin{IEEEkeywords}
Dispersed Computing, Task Scheduling, Throughput Optimality, Max-Weight Policy.
\end{IEEEkeywords}
\section{Introduction}\label{sec:intro}
\begin{figure*}[t]
  \centering
    \includegraphics[width = 0.55 \paperwidth]{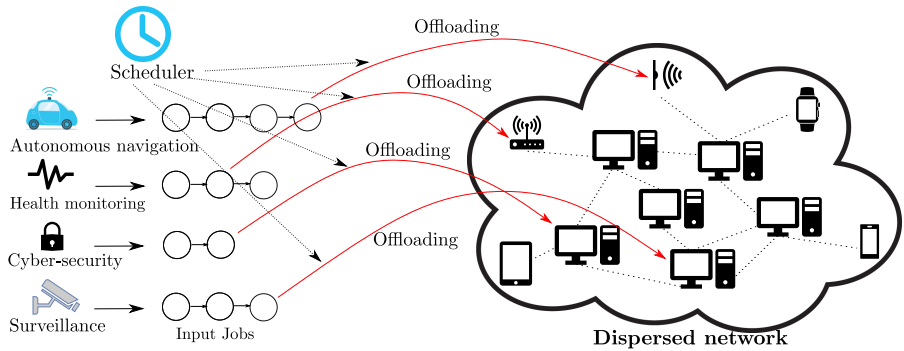}
\caption{Illustration of dispersed computing.}
\label{fig:dispersed_computing}
\end{figure*}
In many large-scale data analysis application domains, such as surveillance, autonomous navigation, and cyber-security, much of the needed data is collected at the edge of the network via a collection of sensors, mobile platforms, and users' devices. In these scenarios, continuous transfer of the massive amount of collected data from edge of the network to back-end servers (e.g., cloud) for processing incurs significant communication and latency costs. As a result, there is a growing  interest in development of in-network \emph{dispersed computing} paradigms that leverage the computing capabilities of heterogeneous resources dispersed across the network (e.g., edge computing, fog computing \cite{bonomi2012fog,chiang2016fog,hu2015mobile}). 

At a high level, a dispersed computing scenario (see Fig. \ref{fig:dispersed_computing}) consists of a group of networked computation nodes, such as wireless edge access points, network routers, and users' computers that can be utilized for offloading the computations.  There is, however, a broad range of computing capabilities that may be supported by different computation nodes. Some may perform certain kinds of operations at extremely high rate, such as high throughput matrix multiplication on GPUs, while the same node may perform worse on single threaded performance. Communication bandwidth between different nodes in dispersed computing scenarios can also be very limited and heterogeneous. Thus, for scheduling of computation tasks in such networks, it is critical to design efficient algorithms which carefully account for computation and communication heterogeneity.

In this paper, we consider the task scheduling problem in a dispersed computing network in which arriving jobs are modeled as chains, with nodes representing tasks, and edges representing precedence constraints among tasks. Each server is capable of serving all the task types and the service rate of a server depends on which task type it is serving.\footnote{The exponential distribution of servers' processing times is commonly observed in many computing scenarios (see e.g. \cite{lee2018speeding,reisizadeh2019coded}), and the considered geometric distribution in this paper is the equivalent of exponential distribution for discrete-time systems.} More specifically, after one task is processed by a server, the server can either process the children task locally or send the result to another server in the network to continue with processing of the children task. However, each server has a bandwidth constraint that determines the delay for sending the results. A significant challenge in this \textit{communication-aware} scheduling is that unlike traditional queueing networks, processed tasks are not sent from one queue to another queue probabilistically. Indeed, the scheduling decisions also determine the routing of tasks in the network. Therefore, it is not clear what is the maximum throughput (or, equivalently, the capacity region) that one can achieve in such networks, and what scheduling policy is throughput-optimal. This raises the following questions.
\begin{itemize}
    \item What is the capacity region of the network?
    \item What is a throughput-optimal scheduling policy for the network?
\end{itemize}

Our computation and network models are related to \cite{pedarsani2014scheduling,pedarsani2017robust}. However, the model that we consider in this paper is more general, as the communication times between servers are taken into account. In our network model, sending the outputs of processed tasks from one server to another server results in some communication constraints that make the design of efficient scheduling policy even more challenging. %To the best of our knowledge, our work is the first throughput-optimal communication-aware scheduling policy for scheduling tasks with precedence constraints.

As the main contributions of the paper, we first characterize the capacity region of this problem (i.e., the set of all arrival rate vectors of computations for which there exists a scheduling policy that makes the network rate stable). To capture the complicated computing and communication procedures in the network, we propose a novel virtual queueing network encoding the state of the network. Then, we propose a Max-Weight type scheduling policy for the virtual queueing network, and show that it is throughput-optimal.

Since the proposed virtual queueing network is quite different from traditional queueing networks, it is not clear that the capacity region of the proposed virtual queueing network is equivalent to the capacity region of the original scheduling problem. Thus, to prove throughput-optimality Max-Weight policy, we first show the equivalence of two capacity regions: one for the dispersed computing problem that is characterized by a linear program (LP), and one for the virtual queueing network characterized by a mathematical optimization problem that is not an LP. Then, under the Max-Weight policy, we consider the stochastic network in the \emph{fluid limit}, and using a Lyapunov argument, we show that the fluid model of the virtual queueing network is \textit{weakly stable} \cite{dai1995positive} for all arrival vectors in the capacity region, and \textit{stable} for all arrival vectors in the interior of the capacity region. This implies that the Max-Weight policy is throughput-optimal for the virtual queueing network as well as for the original scheduling problem.       

Finally, we extend the scheduling problem for dispersed computing to a more general computing model, where jobs are modeled as directed acyclic graphs (DAG). Modeling a job as a DAG incurs more complex \textit{logic dependencies} among the smaller tasks of the job compared to chains. More precisely, the logic dependency difficulty arises due to the requirement that the data of processed parents tasks have to be sent to the \emph{same} server for processing child tasks. To resolve this logic dependency difficulty, we consider a specific class of networks, named \textit{broadcast network}, where servers in the network always broadcast the data of processed tasks to other servers, and propose a virtual queueing network for the DAG scheduling problem. We further demonstrate that Max-Weight policy is throughput-optimal for the proposed virtual queueing network.

In the following, we summarize the key contributions in this paper:
\begin{itemize}
    \item We characterize the capacity region for the new network model. 
    \item To capture the heterogeneity of computation and communication in the network, we propose a novel virtual queueing network. 
    \item We propose a Max-Weight type scheduling policy, which is throughput-optimal for the proposed virtual queueing network.
    \item For the communication-aware DAG scheduling problem for dispersed computing, we demonstrate that Max-Weight policy is throughput-optimal for broadcast networks. 
\end{itemize}
\vspace{0.3cm}
\textbf{Related Work:} Task scheduling problem has been widely studied in the literature, which can be divided into two main categories: static scheduling and dynamic scheduling. In the static or offline scheduling problem, jobs are present at the beginning, and the goal is to allocate tasks to servers such that a performance metric such as average computation delay is minimized. In most cases, the static scheduling problem is computationally hard, and various heuristics, approximation and stochastic approaches are proposed (see e.g. \cite{kwok1999static,zheng2013stochastic,tang2011stochastic,li2015scheduling,chen2009ant,blythe2005task,yu2006scheduling,topcuoglu2002performance}). Given a task graph over heterogeneous processors, \cite{topcuoglu2002performance} proposes Heterogeneous Earliest Finish Time (HEFT) algorithm which first prioritizes tasks based on the dependencies in the graph, and then assign tasks to processors starting with the highest priority. In the scenarios of edge computing, static scheduling problem has been widely investigated in recent years \cite{kao2017hermes,jia2014heuristic,mahmoodi2016optimal,zhang2015collaborative}. To minimize computation latency while meeting prescribed constraints, \cite{kao2017hermes} proposes a polynomial time approximation scheme algorithm with theoretical performance guarantees. \cite{jia2014heuristic} proposes an heuristic online task offloading algorithm which makes the parallelism between the mobile and the cloud maximized by using a load-balancing approach. \cite{mahmoodi2016optimal} proposes an optimal wireless-aware joint scheduling and computation offloading scheme for multicomponent applications. In \cite{zhang2015collaborative}, under a stochastic wireless channel, collaborative task execution between a mobile device and a cloud clone for mobile applications has been investigated.

In the online scheduling problem, jobs arrive to the network according to a stochastic process, and get scheduled dynamically over time. In many works in the literature, the tasks have dedicated servers for processing, and the goal is to establish stability conditions for the network \cite{baccelli1989acyclic,baccelli1989fork}. Given the stability results, the next natural goal is to compute the expected completion times of jobs or delay distributions. However, few analytical results are available for characterizing the delay performance, except for the simplest models. One approach to understand the delay performance of stochastic networks is analyzing the network in ``heavy-traffic'' regime. See for example \cite{varma1990heavy,nguyen1993processing,stolyar2004maxweight}. 
When the tasks do not have dedicated servers, one aims to find a throughput-optimal scheduling policy (see e.g. \cite{eryilmaz2005stable}), i.e. a policy that stabilizes the network, whenever it can be stabilized. Max-Weight scheduling, proposed in \cite{tassiulas1992stability,dai2005maximum}, is known to be throughput-optimal for wireless networks, flexible queueing networks \cite{neely2005dynamic,eryilmaz2007fair,walton2014concave} and data centers networks \cite{maguluri2012stochastic}. In \cite{feng2017approximation,zhang2018optimal}, the chain-type computation model is also considered for distributed computing networks. However, our network model is more general as it captures the computation heterogeneity in dispersed computing networks, e.g., the service rate of a server in our network model depends on which task type it serves.

%However, there has been no prior work that develops a Max-Weight type policy for \emph{communication-aware} task scheduling.

\textbf{Notation.} We denote by $[N]$ the set $\{1, \dots , N\}$ for any positive integer $N$. For any two vectors $\vec{x}$ and $\vec{y}$, the notation $\vec{x} \leq \vec{y}$ means that $x_{i} \leq y_{i}$ for all $i$.
\section{System Model}\label{sec:sys}
\begin{figure*}[t]
  \centering
    \includegraphics[width = 0.55\paperwidth]{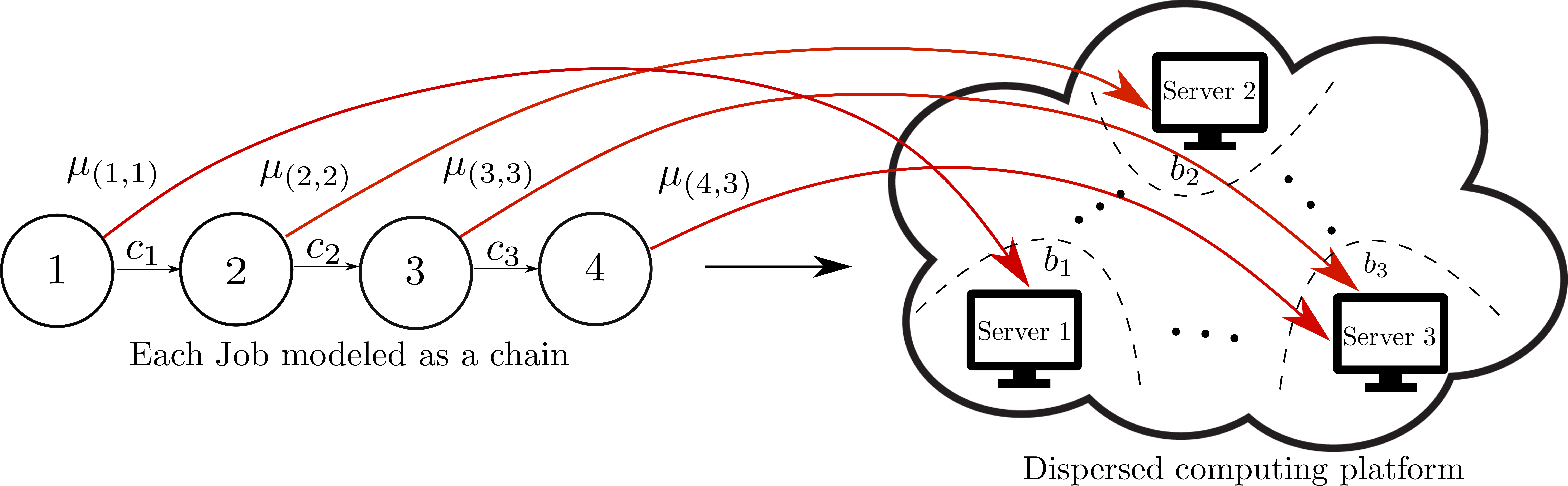}
\caption{A simple example of task scheduling for dispersed computing.}
\label{fig:chain}
\end{figure*}
\subsection{Computation Model}\label{subsec:computation_model}
As shown in Fig. \ref{fig:chain}, each job is modeled as a chain which includes serial tasks. Each node of the chain represents one task type, and each (directed) edge of the chain represents a precedence constraint. Moreover, we consider $M$ types of jobs, where each type is specified by one chain structure.

For this problem, we define the following terms. Let $(\mathcal{I}_m,\{c_k\}_{k \in \mathcal{I}_m})$ be the chain corresponding to the job of type $m$, $m \in [M]$, where $\mathcal{I}_m$ denotes the set of nodes of type-$m$ jobs, and $c_k$ denotes the data size (bits) of output type-$k$ task. Let the number of tasks of a type-$m$ job be $K_m$, i.e. $|\mathcal{I}_m| = K_m$, and the total number of task types in the network be $K$, so that $\sum^{M}_{m=1} K_m = K$. We assume that jobs are independent with each other which implies $\mathcal{I}_1,\mathcal{I}_2,\dots,\mathcal{I}_m$ are disjoint. Thus, we can index the task types in the network by $k$, $k \in [K]$, starting from job type $1$ to $M$. Therefore, task type-$k$ belongs to job type $m(k)$ if
\begin{align*}
    \sum^{m(k)-1}_{m^{'}=1} K_{m^{'}} < k \leq \sum^{m(k)}_{m^{'}=1} K_{m^{'}}. 
\end{align*}
We call task $k^{'}$ a parent of task $k$ if they belong to the same chain and there is a directed edge from $k^{'}$ to $k$. Without loss of generality, we let task $k$ be the parent of task $k+1$, if task $k$ and task $k+1$ belong to the same chain, i.e. $m(k)=m(k+1)$. In order to process task $k+1$, the processing of task $k$ should be completed. Node $k$ is said to be the root of chain type $m(k)$ if $k =1+ \sum^{m(k)-1}_{m^{'}=1}K_{m^{'}}$. We denote $\mathcal{C}$ as the set of the root nodes of the chains, i.e. $\mathcal{C} = \{k: k =1+ \sum^{i-1}_{m=1}K_{m}, \ \forall \ i \in [M]\}$. Also, node $k$ is said to be the last node of chain type $m(k)$ if $k = \sum^{m(k)}_{m^{'}=1}K_{m^{'}}$. Then, we denote $\mathcal{H}$ as the set of the last nodes of the chains, i.e. $\mathcal{H} = \{k: k = \sum^i_{m=1}K_m, \ \forall \  i \in [M]\}$.
\subsection{Network Model}\label{subsec:network_model}
In the dispersed computing network, as shown in Fig. \ref{fig:chain}, there are $J$ servers which are connected with each other. Each server can serve all types of tasks. We consider the network in discrete time. We assume that the arrival process of jobs of type $m$ is a Bernoulli process with rate $\lambda_m$, $0 < \lambda_m < 1$; that is, in each time slot a job of type $m$ arrives to the network with probability $\lambda_m$ independently over time. We assume that the service times for the nodes are geometrically distributed, independent across time slots and across different nodes, and also independent from the arrival processes. When server $j$ processes type-$k$ tasks, the service completion time has mean $\mu^{-1}_{(k,j)}$. Thus, $\mu_{(k,j)}$ can be interpreted as the service rate of type-$k$ task when processed by server $j$. Similarly, we model the communication times between two servers as geometric distribution,  which are independent across time slots and across different nodes, and also independent from the arrival processes. When server $j$ communicates data of size $1$ bit to another server, the communication time has mean $b^{-1}_j$. Therefore, $b_j$ can be interpreted as the average bandwidth (bits/time slot) of server $j$ for communicating data of processed tasks. Without loss of generality, the system parameters can always be rescaled so that $\frac{b_j}{c_k}<1$ for all $k$ and $j$, by speeding up the clock of the system. We assume that each server is able to communicate data and process tasks at the same time slot. In the task scheduling problem of dispersed computing, tasks are scheduled on servers based on a scheduling policy. After a task is served by a server, a scheduling policy is to determine where the data of processed task should be sent to for processing the child task.
\subsection{Problem Formulation}
Given the above computation model and network model, we formulate the task scheduling problem of dispersed computing based on the following terms. 
\begin{definition}
Let $Q^n$ be a stochastic process of the number of jobs in the network over time $n \geq 0$. A network is \textit{rate stable} if
\begin{align}
\lim_{n \rightarrow \infty} \frac{Q^n}{n} = 0 \quad \text{almost surely}.
\end{align}
\end{definition}
\begin{definition}[Capacity Region]
\label{def:capacity}
We define the \textit{capacity region} of the network to be the set of all arrival rate vectors where there exists a scheduling policy that makes the network rate stable. 
\end{definition}
\begin{definition}
The fluid level of a stochastic process $Q^n$, denoted by $X(t)$, is defined as
\begin{equation}
    X(t) = \lim_{r \to \infty}  \frac{Q^{\floor{rt}}}{r}.
\end{equation}
\end{definition}
\begin{definition}
Let $X(t)$ be the fluid level of a stochastic process. The fluid model of the the process is \textit{weakly stable}, if $X(0)=0$ for $t=0$, then $X(t)=0$ for all $t \geq 0$. \cite{dai1995positive}
\end{definition}
Note that we later model the network as a network of virtual queues. Since the arrival and service processes are memoryless, given a scheduling policy, the queue-length vector in this virtual queueing network is a Markov process. 
\begin{definition}
A network is \textit{strongly stable} if its underlying Markov process is positive recurrent for all the arrival rate vectors in the interior of the capacity region. 
\end{definition}
\begin{definition}
\label{def:throughput_optimal}
A scheduling policy is \textit{throughput-optimal} if, under this policy, the network is rate stable for all arrival rate vectors in the capacity region; and strongly stable for all arrival rate vectors in the interior of the capacity region.  
\end{definition}
Based on above definitions, our problem is now formulated as the following.
\begin{problem}
Consider a dispersed computing network consisting of network and computation models as defined in Sections \ref{subsec:computation_model} and \ref{subsec:network_model}, we pose the following two questions:
\begin{itemize}
    \item What is the capacity region of the network as defined in Definition \ref{def:capacity}?
    \item What is a throughput-optimal scheduling policy for the network as defined in Definition \ref{def:throughput_optimal}?
\end{itemize}
\end{problem}

\section{Capacity Region Characterization}\label{sec:cap}
As mentioned previously, our goal is to find a throughput-optimal scheduling policy. Before that, we characterize the capacity region of the network. 

Now, we consider an arbitrary scheduling policy and define two allocation vectors to characterize the scheduling policy. Let $p_{(k,j)}$ be the long-run fraction of capacity that server $j$ allocates for processing available type-$k$ tasks. We define $\vec{p}$ to be the capacity allocation vector. An allocation vector $\vec{p}$ is \textit{feasible} if 
\begin{equation} \label{eq:p}
    \sum^{K}_{k=1}p_{(k,j)} \leq 1, \; \forall \; j \in [J].
\end{equation}

Let $q_{(k,j)}$ be the long-run fraction of the bandwidth that server $j$ allocates for communicating data of processed type-$k$ tasks. We can define $\vec{q}$ to be the bandwidth allocation vector. Therefore, an allocation vector $\vec{q}$ is feasible if
\begin{equation} \label{eq:q}
    \sum_{k \in [K] \backslash \mathcal{H}} q_{(k,j)} \leq 1, \; \forall \; j \in [J].
\end{equation}

Given a capacity allocation vector $\vec{p}$, consider task $k$ and task $k+1$ which are in the same chain on server $j$. As time $t$ is large, up to time $t$, the number of type-$k$ tasks processed by server $j$ is $\mu_{(k,j)}p_{(k,j)}t$ and the number of type-($k+1$) tasks processed by server $j$ is $\mu_{(k+1,j)}p_{(k+1,j)}t$. Therefore, as $t$ is large, up to time $t$, the number of type-($k+1$) tasks that server $j$ is not able to serve is
\begin{equation}
    \mu_{(k,j)}p_{(k,j)}t-\mu_{(k+1,j)}p_{(k+1,j)}t
\end{equation}
Clearly, the type-($k+1$) tasks which cannot be served by server $j$ have to be processed by other servers. Hence, up to time $t$ and $t$ is large, server $j$ has to at least communicate data of $\mu_{(k,j)}p_{(k,j)}t-\mu_{(k+1,j)}p_{(k+1,j)}t$ processed type-$k$ tasks to other servers.

On the other hand, given a bandwidth allocation vector $\vec{q}$, up to time $t$ and $t$ is large, the number of the type-$k$ tasks communicated by server $j$ is $\frac{b_jq_{(k,j)}t}{c_k}$. Therefore, to make the network stable, we obtain the following constraints:
\begin{equation}
    \frac{b_jq_{(k,j)}}{c_k} \geq \mu_{(k,j)}p_{(k,j)} - \mu_{(k+1,j)}p_{(k+1,j)}
\end{equation}
$\forall \ j \in [J]$ and $\forall \ k \in [K] \backslash \mathcal{H}$.

For this scheduling problem, we can define a linear program (LP) that characterizes the capacity region of the network, defined to be the rate vectors $\vec{\lambda}$ for which there is a scheduling policy with corresponding allocation vectors $\vec{p}$ and $\vec{q}$ such that the network is rate stable. The nominal traffic rate to all nodes of job type $m$ in the network is $\lambda_m$. Let $\nu_k(\vec{\lambda})$ be the nominal traffic rate to the node of task $k$ in the network. Then, $\nu_k(\vec{\lambda})=\lambda_m$ if $m(k)=m$. The LP that characterizes capacity region of the network makes sure that the total service capacity allocated to each node in the network is at least as large as the nominal traffic rate to that node, and the communication rate of each server is at least as large as the rate of task that the server is not capable of serving. Then, the LP known as the \textit{static planning problem (SPP)} \cite{harrison2000brownian} - is defined as follows:

\textbf{Static Planning Problem (SPP):} 
\begin{align}
\quad \text{Maximize} \quad  & \delta\\
    \text{subject to} \quad & \nu_k(\vec{\lambda}) \leq \sum^{J}_{j=1}\mu_{(k,j)}p_{(k,j)} - \delta, \ \forall \ k  \label{eq:LP}\\
     & \frac{b_jq_{(k,j)}}{c_k}-\delta \geq \mu_{(k,j)}p_{(k,j)} - \mu_{(k+1,j)}p_{(k+1,j)}, \nonumber \\
     & \ \forall \ j, \ \forall \ k \in [K] \backslash \mathcal{H}\\
    & 1 \geq \sum^{K}_{k=1}p_{(k,j)}, \ \forall \ j\\
    & 1 \geq \sum_{k \in [K] \backslash \mathcal{H}} q_{(k,j)}, \ \forall \ j\\
    & \vec{p} \geq \vec{0},\ \vec{q} \geq \vec{0}.
\end{align}
Based on SPP above, the capacity region of the network can be characterized by following proposition.
\begin{proposition}\label{optimal_LP}
The \textit{capacity region} $\Lambda$ of the network characterizes the set of all rate vectors $\vec{\lambda} \in \mathbb{R}^M_{+}$ for which the corresponding optimal solution $\delta^*$ to the static planning problem (SPP) satisfies $\delta^* \geq 0$. In other words, \textit{capacity region} $\Lambda$ of the network is characterized as follows 
\begin{align*}
   & \Lambda \triangleq \Bigg\{\vec{\lambda} \in \mathbb{R}^{M}_{+}: \exists \ \vec{p} \geq \vec{0}  , \ \vec{q} \geq \vec{0} \ \text{s.t.} \ \sum^{K}_{k=1}p_{(k,j)} \leq 1 \ \forall \ j,\\
   & \sum_{k \in [K] \backslash \mathcal{H}}q_{(k,j)} \leq 1 \ \forall \ j,
   \ \nu_k(\vec{\lambda}) \leq \sum^{J}_{j=1}\mu_{(k,j)}p_{(k,j)} \ \forall \ k, \ \text{and}\\ &\frac{b_jq_{(k,j)}}{c_k} \geq \mu_{(k,j)}p_{(k,j)} - \mu_{(k+1,j)}p_{(k+1,j)} \ \forall \ j, \ \forall \ k \in [K] \backslash \mathcal{H}\Bigg\}.
  \end{align*}
 \end{proposition}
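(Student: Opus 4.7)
The plan is to establish the proposition in two directions: a converse that shows any rate vector admitting a rate-stable policy must lie in $\Lambda$, and an achievability that shows any $\vec{\lambda}\in\Lambda$ is in fact supportable by some scheduling policy. The constraints defining $\Lambda$ are exactly the SPP constraints with $\delta=0$, so the content is to relate these quantities to time-averages of a generic policy and, conversely, to a policy constructed from a feasible $(\vec{p},\vec{q})$.

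For the converse, I would fix an arbitrary scheduling policy under which the network is rate stable at arrival rate $\vec{\lambda}$. Let $T_{(k,j)}(t)$ be the cumulative amount of time in $[0,t]$ that server $j$ devotes to processing task type $k$, and $B_{(k,j)}(t)$ the cumulative bandwidth that server $j$ devotes to communicating type-$k$ outputs. Define the long-run fractions $p_{(k,j)}=\lim_{t\to\infty}T_{(k,j)}(t)/t$ and $q_{(k,j)}=\lim_{t\to\infty}B_{(k,j)}(t)/t$ (passing to subsequential limits if necessary, since these quantities are bounded in $[0,1]$). The feasibility constraints $\sum_k p_{(k,j)}\le 1$ and $\sum_{k\notin\mathcal{H}}q_{(k,j)}\le 1$ follow immediately from the time-budget of each server. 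Rate stability of the type-$k$ virtual workload requires the total type-$k$ service rate in the network to meet its arrival rate, giving $\nu_k(\vec{\lambda})\le\sum_j\mu_{(k,j)}p_{(k,j)}$. For the bandwidth constraint, I would use a per-server flow-conservation argument: up to time $t$, server $j$ produces $\mu_{(k,j)}p_{(k,j)}t+o(t)$ completed type-$k$ tasks, of which at most $\mu_{(k+1,j)}p_{(k+1,j)}t+o(t)$ can be consumed locally as inputs to type-$(k+1)$ processing (since child-task service cannot outpace its local precursor production plus external arrivals, and rate stability controls the surplus); the remainder must be shipped out, which forces $\tfrac{b_jq_{(k,j)}}{c_k}\ge\mu_{(k,j)}p_{(k,j)}-\mu_{(k+1,j)}p_{(k+1,j)}$.

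For achievability, I would take $\vec{\lambda}\in\Lambda$ with feasible allocations $(\vec{p},\vec{q})$ and construct a simple randomized, non-work-conserving policy: in each slot, server $j$ picks task type $k$ to process with probability $p_{(k,j)}$ (idling with the residual probability), and independently allocates its bandwidth among task types $k\notin\mathcal{H}$ according to $q_{(k,j)}$; after a processed type-$k$ task completes, it is shipped to a randomly chosen server whose type-$(k+1)$ service demand has not yet been met, according to a splitting rule proportional to $\mu_{(k+1,\cdot)}p_{(k+1,\cdot)}$. By the strong law of large numbers applied to the resulting i.i.d.\ service and routing decisions, the long-run type-$k$ processing rate at server $j$ equals $\mu_{(k,j)}p_{(k,j)}$ and the long-run communication rate matches the constraint in $\Lambda$, so that the backlog of each task type and each communication buffer grows sublinearly, i.e.\ the network is rate stable. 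Alternatively, I would note that the Max-Weight policy established in the later sections of the paper already demonstrates rate stability for every $\vec{\lambda}\in\Lambda$, which subsumes this direction.

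The main obstacle will be the converse bandwidth inequality: unlike the processing-rate bound, which follows from a single-queue argument, the bandwidth constraint couples the type-$k$ and type-$(k+1)$ activities on the same server and requires turning a pathwise flow-conservation identity into a clean inequality between time-averages. The subtlety is that the ``surplus'' type-$k$ output $\mu_{(k,j)}p_{(k,j)}t-\mu_{(k+1,j)}p_{(k+1,j)}t$ may be negative, in which case the inequality is vacuous, but when positive it must be rigorously tied to bytes actually leaving server $j$; I would handle this by bounding the backlog of unshipped type-$k$ outputs at server $j$ and invoking rate stability of that backlog to conclude the bandwidth rate must dominate the surplus production rate, up to a $o(t)$ term that vanishes in the limit.
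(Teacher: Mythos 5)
Your proposal follows essentially the same two-part structure as the paper's proof: a converse that extracts long-run allocation fractions $(\vec{p},\vec{q})$ from an arbitrary rate-stable policy and derives the SPP constraints via per-server flow conservation on the surplus type-$k$ output, and an achievability direction that runs a static policy built from a feasible $(\vec{p},\vec{q})$. The paper phrases both directions in the fluid limit (regular points, weak stability, a generalized head-of-the-line processor-sharing policy) while you argue with pre-limit time averages, subsequential limits, and a randomized slot-by-slot policy plus the SLLN, but these are standard equivalent formalizations of the same flow-conservation argument rather than a genuinely different route.
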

\begin{proof}
We show that $\delta^{*} \geq 0$ is a necessary and sufficient condition for the rate stability of the network. Consider the network in the fluid limit (See \cite{dai1995positive} for more details on the stability of fluid models). At time $t$, we denote fluid level of type-$k$ tasks in the network as $X_k(t)$,  fluid level of type-$k$ tasks served by server $j$ as $X_{(k,j)}(t)$ and fluid level of type-$k$ tasks sent by server $j$ as $X_{(k,j),c}(t)$.

The dynamics of the fluid are as follows
\begin{equation}
    X_k(t) = X_k(0) +\lambda_{m(k)} t -D_k(t)
\end{equation}
where $\lambda_{m(k)} t$ is the total number of jobs of type $m$ that
have arrived to the network until time $t$, and $D_k(t)$ is the
total number of type-$k$ tasks that have been processed up to
time $t$ in the fluid limit. For $\forall \ k \in [K] \backslash \mathcal{H}$, because of flow conservation, we have
\begin{equation}\label{eq:proposition}
    X_{(k,j)}(t) - X_{(k,j),c}(t) \leq X_{(k+1,j)}(t).
\end{equation}

Suppose $\delta^* <0$. Let's show that the network is weakly unstable, i.e., if $X_k(0) = 0$ for all $k$, there exists $t_0$ and $k$ such that $X_k(t_0) > 0$. In contrary, suppose that there exists a scheduling policy such that under that
policy for all $t \geq 0$ and all $k$, $X_k(t)=0$. Now, we pick a regular point $t_1$ which means $X_k(t_1)$ is differentiable at $t_1$ for all $k$. Then, for all $k$, $\dot{X}_k(t_1) = 0$ which implies that $\dot{D}_k(t_1) = \lambda_{m(k)}=\nu_k(\lambda)$. At a regular point $t_1$, $\dot{D}_k(t_1)$ is exactly the total service capacity allocated to type-$k$ tasks at time $t_1$. This implies that there exists $p_{(k,j)}$ at time $t_1$ such that $\nu_k(\lambda)=\sum^J_{j=1}\mu_{(k,j)}p_{(k,j)}$ for all $k$. Furthermore, from (\ref{eq:proposition}), we have 
\begin{equation}
    \mu_{(k,j)}p_{(k,j)}t_1 - \frac{b_jq_{(k,j)}}{c_k}t_1 \leq  \mu_{(k+1,j)}p_{(k+1,j)}t_1
\end{equation}
which implies that there exists $q_{(k,j)}$ at time $t_1$ such that
\begin{equation}
        \frac{b_jq_{(k,j)}}{c_k} \geq \mu_{(k,j)}p_{(k,j)}-\mu_{(k+1,j)}p_{(k+1,j)}
\end{equation}
$\forall \ k \in [K] \backslash \mathcal{H}$. However, this contradicts $\delta^* < 0$.

Now suppose that $\delta^{*} \geq 0$, $\vec{p}^{\,*}$ and $\vec{q}^{\,*}$ are the capacity allocation vector and bandwidth allocation vector respectively that solve SPP. Now, let us consider a generalized head-of-the-line processor sharing policy that server $j$ works on type-$k$ tasks with capacity $p^*_{(k,j)}$ and communicates the processed data of type-$k$ tasks with bandwidth $b_jq^*_{(k,j)}$. Then the cumulative service allocated to type-$k$ tasks up to time $t$ is $\sum^J_{j=1}\mu_{(k,j)}p^*_{(k,j)}t \geq (\nu_k(\lambda) + \delta^*)t$. Thus, we have $\dot{X}_k(t)= \nu_k(\lambda)-\sum^J_{j=1}\mu_{(k,j)}p^*_{(k,j)} \leq - \delta^* \leq 0$ for all $t > 0$ and all $k$. If $X_k(0)=0$ for all $k$, then $X_k(t)=0$ for all $t\geq 0$ and all $k$ which implies that the network is weakly stable, i.e., the network is rate stable \cite{dai1995positive}.
\end{proof}
\begin{example}
 We consider that there are two types of jobs arriving to a network, in which $K_1=2$, $K_2=3$ and $c_k=1$, $\forall \ k$. There are $2$ servers in the network, where the service rates are: $\mu_{(1,1)}=4$, $\mu_{(1,2)}=3$, $\mu_{(2,1)}=2$, $\mu_{(2,2)}=4$, $\mu_{(3,1)}=2.5$, $\mu_{(3,2)}=3.5$, $\mu_{(4,1)}=0.5$, $\mu_{(4,2)}=4.5$, $\mu_{(5,1)}=3.5$, $\mu_{(5,2)}=1$ and average bandwidths are $b_1=1.5$, $b_2=1$. As shown in Fig. \ref{fig:capacity}, the capacity region of the previous model without communication constraints \cite{pedarsani2014scheduling,pedarsani2017robust} is larger than the capacity region of our proposed model with communication constraints.
\end{example}
\begin{figure}[t]
    \centering
    \includegraphics[width=\columnwidth]{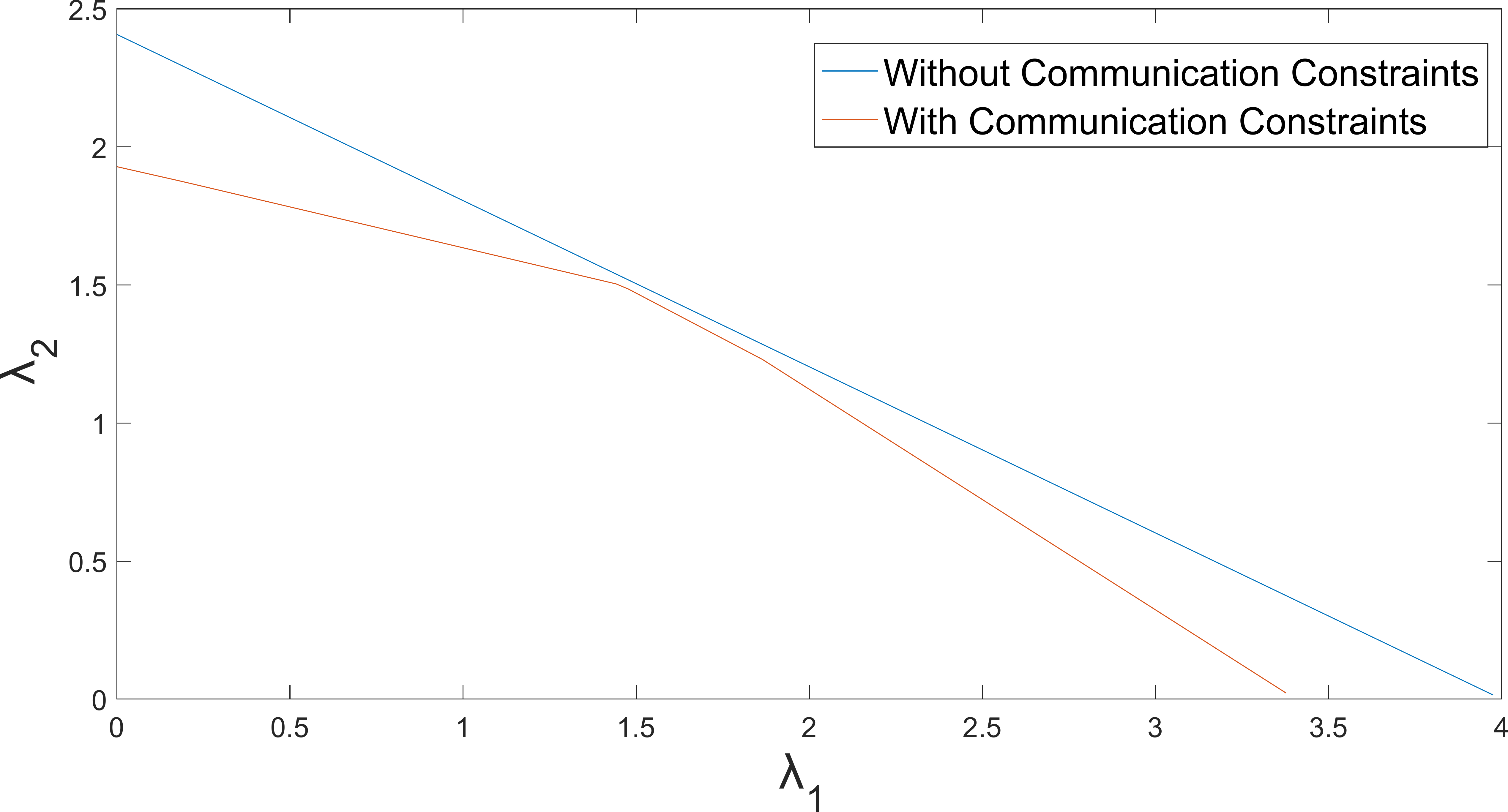}
    \caption{The comparison of capacity regions between the previous model without communication constraints \cite{pedarsani2014scheduling,pedarsani2017robust} and the proposed model with communication constraints in this paper.}\label{fig:capacity}
\end{figure}

\section{Queueing Network Model}\label{sec:queue}
\begin{figure*}[!t]
\minipage{0.32\textwidth}
  \includegraphics[width=\linewidth]{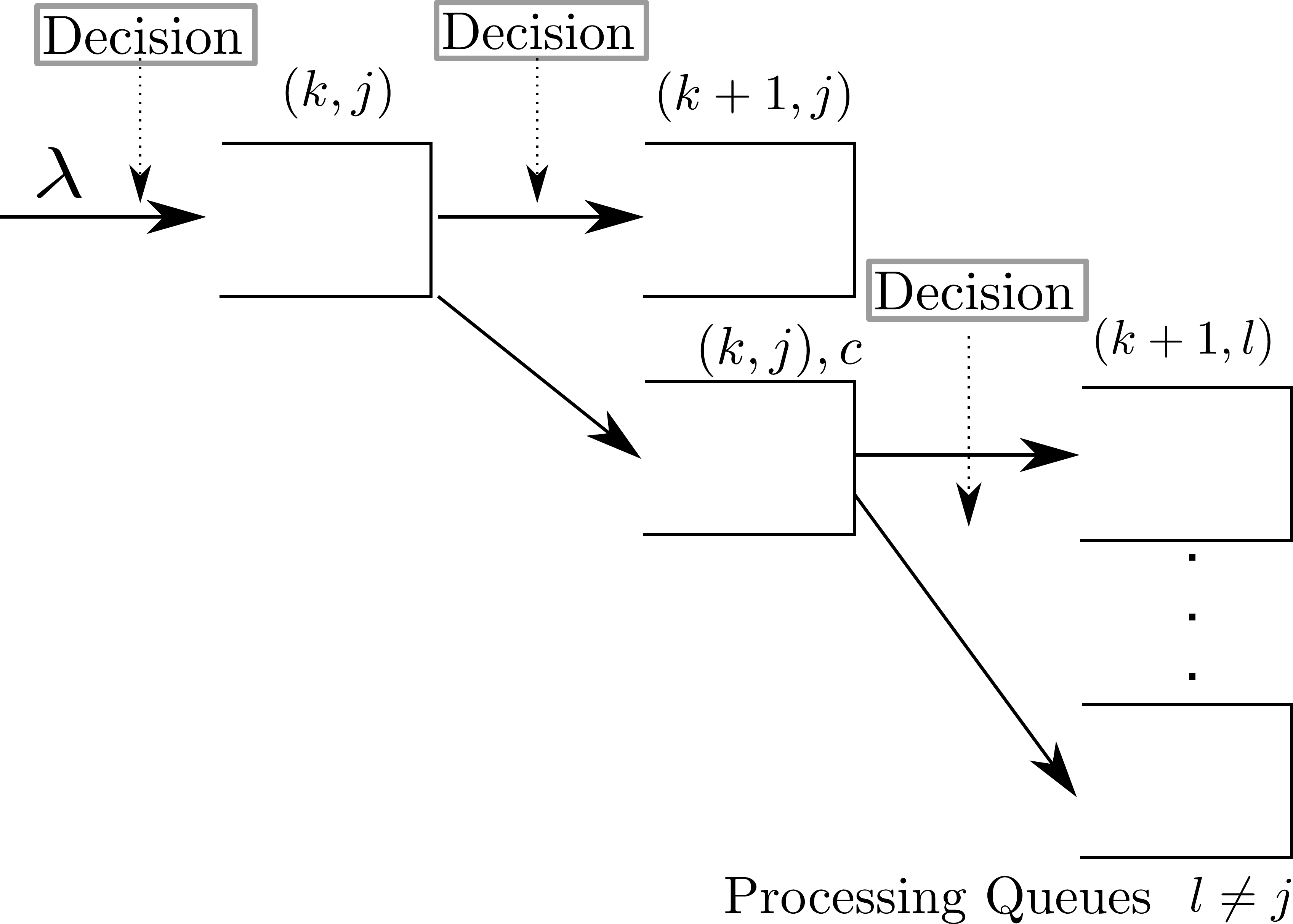}
  \caption{$k$ is a root of one chain ($k \in \mathcal{C}$).}\label{fig:queue_root}
\endminipage\hfill
\minipage{0.37\textwidth}
  \includegraphics[width=\linewidth]{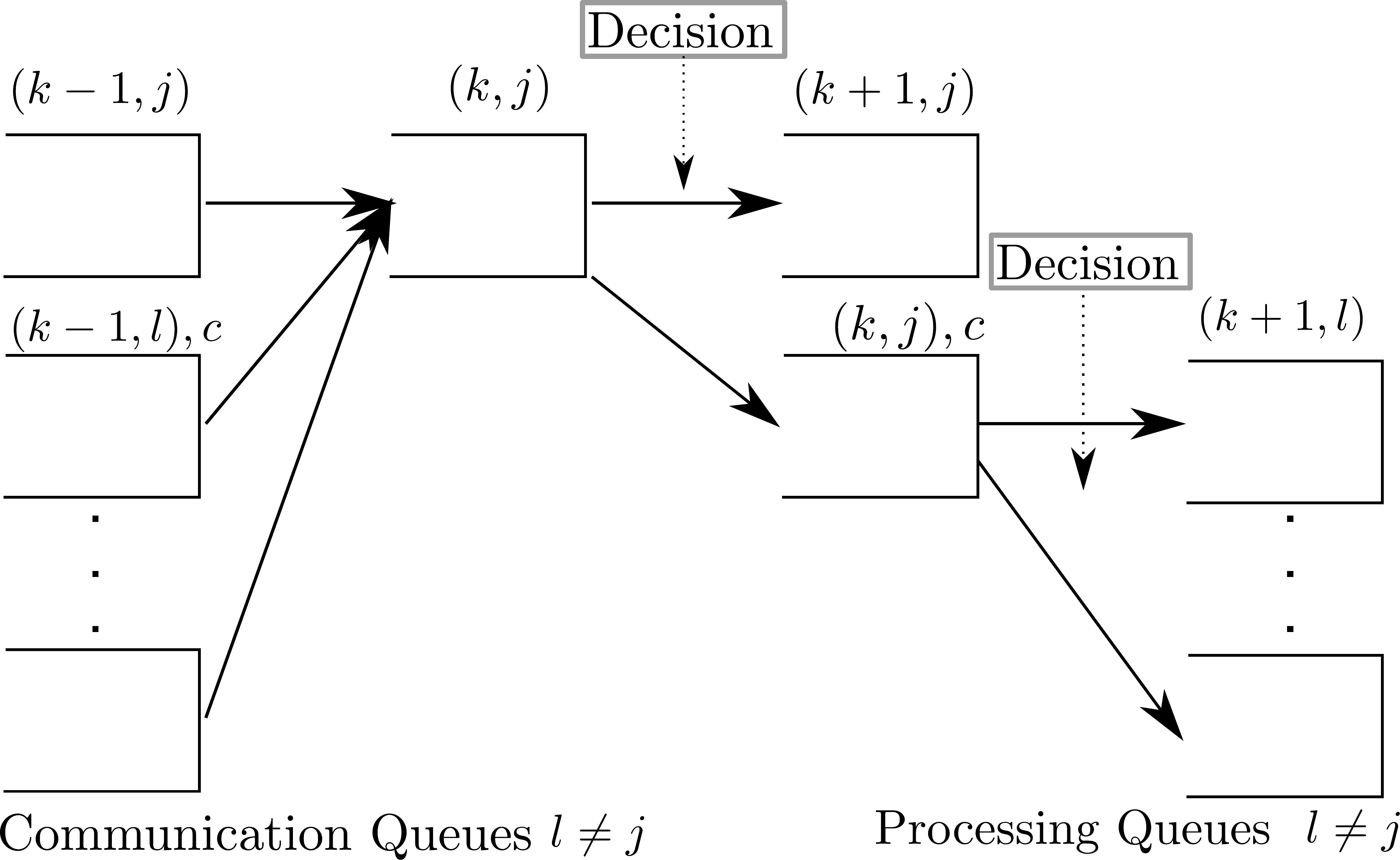}
  \caption{$k$ is not a root of one chain ($k \notin \mathcal{C}$).}\label{fig:queue_notroot}
\endminipage\hfill
\minipage{0.18\textwidth}%
  \includegraphics[width=\linewidth]{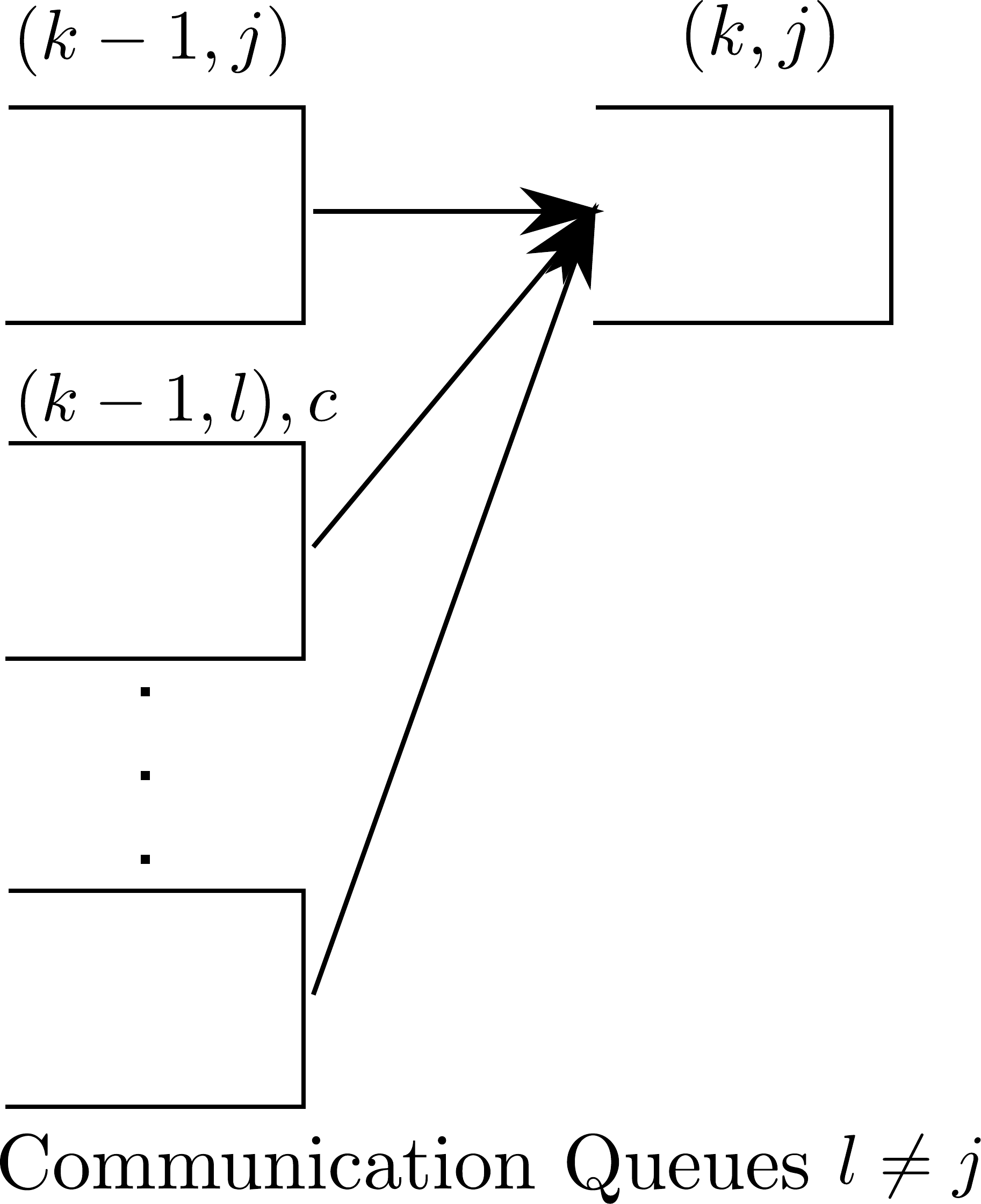}
  \caption{$k \in \mathcal{H}$.}\label{fig:queue_lastnode}
\endminipage
\end{figure*}
In this section, in order to find a throughput-optimal scheduling policy, we first design a virtual queueing network that encodes the state of the network. Then, we introduce an optimization problem called \textit{queueing network planning problem} for the virtual queueing network to characterize the capacity region of this virtual queueing network.
\subsection{Queueing Network}
Based on the computation model and network model described in Section \ref{sec:sys}, let's illustrate how we model a queueing network. The queueing network consists of two kinds of queues, \textit{processing queue} and \textit{communication queue}, which are modeled in the following manner:
\begin{enumerate}
    \item \textbf{Processing Queue}: We maintain one virtual queue called $(k,j)$ for type-$k$ tasks which are processed at server $j$.
    \item \textbf{Communication Queue}: For $k \notin \mathcal{H}$, we maintain one virtual queue called $(k,j),c$ for processed type-$k$ tasks to be sent to other servers by server $j$. 
\end{enumerate}
Therefore, there are $(2K-M)J$ virtual queues in the queueing network. Concretely, the queueing network can be shown as illustrated in Fig. \ref{fig:queue_root}, Fig. \ref{fig:queue_notroot} and Fig. \ref{fig:queue_lastnode}. 

Now, we describe the dynamics of the virtual queues in the network. Let's consider one type of job which consists of serial tasks. As shown in Fig. \ref{fig:queue_root}, a root task $k$ of the job is sent to processing queue $(k,j)$ if the task $k$ is scheduled on server $j$ when a new job comes to the network. For any node $k$ in this chain, the result in processing queue $(k,j)$ is sent to processing queue $(k+1,j)$ if task $k+1$ is scheduled on server $j$. Otherwise, the result is sent to communication queue $(k,j),c$. If task $k+1$ in queue $(k,j),c$ is scheduled on server $l$, it is sent to queue $(k+1,l)$, where $l \in [J] \backslash \{j\}$.  As shown in Fig. \ref{fig:queue_root}, if $k$ is a root of one chain, the traffic to processing queue $(k,j)$ is only the traffic of type-$m(k)$ jobs coming to the network. Otherwise, as shown in Fig. \ref{fig:queue_notroot}, the traffic to processing queue $(k,j)$ is from processing queue $(k-1,j)$ and communication queues $(k-1,l),c$, $\forall \ l \in [J] \backslash \{j\}$. Furthermore, the traffic to communication queue $(k,j),c$ is only from processing queue $(k,j)$, where $k \in [K] \backslash \mathcal{H}$.

Let $Q_{(k,j)}$ denote the length of processing queue $(k,j)$ and $Q_{(k,j),c}$ denote the length of communication queue $(k,j),c$. A task of type $k$ can be processed by server $j$ if and only if $Q_{(k,j)} > 0$ and a processed task of type $k$ can be sent by server $j$ to other servers if and only if $Q_{(k,j),c} > 0$. Let $d^n_{(k,j)} \in \{0,1\}$ be the number of processed tasks of type $k$ by server $j$ at time $n$, $a^n_{m(k)} \in \{0,1\}$ be the number of jobs of type $m$ that arrives to the network at time $n$ and $d^n_{(k,j),c} \in \{0,1\}$ be the number of processed type-$k$ tasks sent to other servers by server $j$ at time $n$. We denote $u^n_{m \rightarrow j} \in \{0,1\}$ as the decision variable that root task of the type-$m$ job is scheduled on server $j$ at time $n$, i.e.,
\begin{align*}
    u^n_{m \rightarrow j} =
     \begin{cases}
     1 & \text{root task of job $m$ is scheduled on server $j$}\\ & \text{at time $n$}\\
     0 & \text{otherwise}.\\
     \end{cases}
\end{align*}
We denote $w^n_{k,j \rightarrow l}\in \{0,1\}$ as the decision variable that processed type-$k$ task in $(k,j),c$ is sent to $(k+1,l)$ at time $n$, i.e., 
\begin{align*}
    w^n_{k,j \rightarrow l} =
     \begin{cases}
     1 & \text{processed type-$k$ task in $(k,j),c$ is sent to}\\ &\text{$(k+1,l)$ at time $n$}\\
     0 & \text{otherwise}.\\
     \end{cases}
\end{align*}
Moreover, let $s^n_{k,j \rightarrow j}\in \{0,1\}$ as the decision variable that processed type-$k$ task in queue $(k,j)$ is sent to queue $(k+1,j)$ at time $n$, i.e.,
\begin{align*}
    s^n_{k,j \rightarrow j} =
     \begin{cases}
     1 & \text{processed type-$k$ task in $(k,j)$ is sent to}\\ &\text{$(k+1,j)$ at time $n$}\\
     0 & \text{otherwise}.\\
     \end{cases}
\end{align*}
We define $\vec{u}^{\,n}$, $\vec{w}^{\,n}$ and $\vec{s}^{\,n}$ to be the decision vectors for $u^n_{m \rightarrow j}$, $w^n_{k,j \rightarrow l}$ and $s^n_{k,j \rightarrow j}$ respectively at time $n$ .

Now, we state the dynamics of the queueing network. If $k \in \mathcal{C}$, then
\begin{equation}
    Q^{n+1}_{(k,j)}=Q^n_{(k,j)} + a^n_{m(k)}u^n_{m(k) \rightarrow j} -d^n_{(k,j)};
\end{equation}
else,
\begin{align}
    Q^{n+1}_{(k,j)}=Q^n_{(k,j)} &+ \sum_{l \in [J] \backslash \{j\}}d^n_{(k-1,l),c}w^n_{k-1,l \rightarrow j} \nonumber \\ 
    & + d^n_{(k-1,j)}s^n_{k-1,j \rightarrow j} -d^n_{(k,j)}.
\end{align}
For $\forall \ k \in [K] \backslash \mathcal{H}$,
\begin{equation}
Q^{n+1}_{(k,j),c} = Q^n_{(k,j),c} + d^n_{(k,j)}(1-s^n_{k,j \rightarrow j}) - d^n_{(k,j),c}.
\end{equation}
\subsection{Queueing Network Planning Problem}
Before we introduce an optimization problem that characterizes the capacity region of the described  queueing network, we first define the following terms. 

Consider an arbitrary scheduling policy. Let $u_{m \rightarrow j}$ be the long-run fraction that root tasks of the type-$m$ job are scheduled on server $j$. We define $\vec{u}$ to be the root-node allocation vector. A root-node allocation vector $\vec{u}$ is feasible if    
\begin{equation} \label{eq:u}
    \sum^J_{j=1} u_{m \rightarrow j} =1, \quad \forall \ m \in [M].
\end{equation}
For the type-$k$ tasks served by server $j$, we denote $s_{k,j\rightarrow j}$ as the long-run fraction that their child tasks (type-$(k+1)$) are scheduled on server $j$. Then, we denote  $\vec{s}$ as an allocation vector. 

For the outputs of processed type-$k$ tasks in virtual queue $(k,j),c$, we denote $w_{k,j \rightarrow l}$ be the long-run fraction that they are sent to virtual queue $(k+1,l)$. An allocation vector $\vec{w}$ is feasible if
\begin{equation}\label{eq:w_lj}
    \sum_{l \in [J] \backslash \{j\}} w_{k,j \rightarrow l}= 1, \quad \forall \ j \in [J], \ \forall \ k \in [K] \backslash \mathcal{H}.
\end{equation}
For the type-$k$ tasks which are processed by server $j$, we define $f_{k,j \rightarrow l}$ as long-run fraction that their child tasks are scheduled on server $l$. Given allocation vectors $\vec{s}$ and $\vec{w}$, we can write $f_{k,j \rightarrow l}$ as follows:
\begin{align} \label{eq:f}
f_{k,j \rightarrow l} = 
    \begin{cases}
    s_{k,j \rightarrow j}, \ & \text{if} \ l = j\\
    (1-s_{k,j \rightarrow j})w_{k,j \rightarrow l}, \ & \text{otherwise}.
    \end{cases}
\end{align}
Clearly, we have 
\begin{align}
    \sum^J_{l=1} f_{k,j \rightarrow l}= 1, \quad \forall \  j \in [J], \ \forall \ k \in [K] \backslash \mathcal{H}.
\end{align}
Let $r_{(k,j)}$ denote the nominal rate to the virtual queue $(k,j)$ and $r_{(k,j),c}$ denote the nominal rate to the virtual queue $(k,j)_c$. If $k \in \mathcal{C}$, the nominal rate $r_{(k,j)}$ can be written as
\begin{equation} \label{eq:rkj}
    r_{(k,j)} = \lambda_{m(k)}u_{m(k) \rightarrow j}.
\end{equation}
If $k \notin \mathcal{C}$, the rate $r_{(k,j)}$ can be obtained by summing $r_{(k-1,l)}$ with $f_{k-1,l\rightarrow j}$ over all servers, i.e., \begin{equation} \label{eq:rkj_notroot}
r_{(k,j)} = \sum^J_{l=1} r_{(k-1,l)} f_{k-1,l \rightarrow j},
\end{equation}
because of flow conservation. Moreover, the rate $r_{(k,j),c}$ can be written as 
\begin{equation}\label{eq:rkjc}
r_{(k,j),c}  = r_{(k,j)}(1-s_{k,j \rightarrow j}).
\end{equation}

Now, we introduce an optimization problem called \textit{queueing network planning problem (QNPP)} that characterizes the capacity region of the virtual queueing network. Given the arrival rate vectors $\vec{\lambda}$, the queueing network planning problem ensures that the service rate allocated to each queue in the queueing network is at least as large as the nominal traffic rate to that queue. The problem is defined as follows:

\textbf{Queueing Network Planning Problem (QNPP):}
\begin{align}
& \text{Maximize} \quad  \gamma \label{eq:op}\\
& \text{subject to} \quad  r_{(k,j)} \leq \mu_{(k,j)}p_{(k,j)} - \gamma, \ \forall \ j, \ \forall \ k.\\
& r_{(k,j),c} \leq \frac{b_jq_{(k,j)}}{c_k} - \gamma, \ \forall \ j, \  \forall \ k\in [K] \backslash \mathcal{H}.
\end{align}
and subject to allocation vectors being feasible, where $r_{(k,j)}$ and $r_{(k,j),c}$ are nominal rate defined in (\ref{eq:rkj})-(\ref{eq:rkjc}). Note that the allocation vectors $\vec{p}$, $\vec{q}$, $\vec{u}$, $\vec{w}$ are feasible if (\ref{eq:p}), (\ref{eq:q}), (\ref{eq:u}) and (\ref{eq:w_lj}) are satisfied. Based on QNPP above, the capacity region of the virtual queueing network can be characterized by following proposition.
\begin{proposition}\label{optimal_QNPP}
The capacity region $\Lambda^{'}$ of the virtual queueing network characterizes the set of all rate vectors $\vec{\lambda} \in \mathbb{R}^M_{+}$ for which the corresponding optimal solution $\gamma^*$ to the queueing network planning problem (QNPP) satisfies $\gamma^* \geq 0$. In other words, \textit{capacity region} $\Lambda^{'}$ is characterized as follows 
\begin{align*}
   & \Lambda^{'} \triangleq \Bigg\{\vec{\lambda} \in \mathbb{R}^{M}_{+}: \exists \ \vec{p} \geq \vec{0}, \ \vec{q}\geq \vec{0} , \ \vec{u} \geq \vec{0},\ \vec{s} \geq \vec{0}, \ \vec{w} \geq \vec{0} \\ & \text{s.t.} \sum^{K}_{k=1}p_{(k,j)} \leq 1 \ \forall \ j, \ r_{(k,j),c} \leq \frac{b_jq_{(k,j)}}{c_k} \; \forall \ j, \ \forall \ k \in [K] \backslash \mathcal{H},\\
    & \sum_{k \in [K] \backslash \mathcal{H}}q_{(k,j)} \leq 1 \ \forall \ j, \  r_{(k,j)} \leq \mu_{(k,j)}p_{(k,j)} \ \forall \ k, \\
    & \sum^J_{j=1} u_{m \rightarrow j} =1 \ \forall \ m, \sum_{l \in [J]\backslash \{j\} } w_{k,j \rightarrow l}= 1 \ \forall \ j, \ \forall \ k \in [K] \backslash \mathcal{H}\Bigg\}.
  \end{align*}
 \end{proposition}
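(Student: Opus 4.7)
The plan is to mirror the two-direction fluid-limit argument that was used for Proposition~\ref{optimal_LP}, now applied to the virtual queueing network rather than the original scheduling problem. Concretely, I would set up fluid levels $Y_{(k,j)}(t)$ and $Y_{(k,j),c}(t)$ for the processing and communication queues of Section~\ref{sec:queue}, derive their flow dynamics from the per-slot update equations, and then argue necessity and sufficiency of $\gamma^{*}\geq 0$ separately.

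For necessity, I would assume $\gamma^{*}<0$ and show that no scheduling policy can keep every virtual queue rate stable. Fix any policy and let $\vec{p},\vec{q},\vec{u},\vec{s},\vec{w}$ be the long-run fractions it induces; these are automatically feasible in the senses of \eqref{eq:p}, \eqref{eq:q}, \eqref{eq:u}, \eqref{eq:w_lj}. Define $r_{(k,j)}$ and $r_{(k,j),c}$ by \eqref{eq:rkj}--\eqref{eq:rkjc}; a short induction on the chain index $k$ using flow conservation shows these are exactly the arrival rates into the corresponding virtual queues in the fluid limit. Picking a regular point $t_{1}$ at which all fluid levels are zero and differentiable, $\dot Y_{(k,j)}(t_{1})=0$ forces $r_{(k,j)}=\mu_{(k,j)}p_{(k,j)}$ and $\dot Y_{(k,j),c}(t_{1})=0$ forces $r_{(k,j),c}=b_{j}q_{(k,j)}/c_{k}$. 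This yields a feasible point of QNPP with $\gamma=0$, contradicting $\gamma^{*}<0$.

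For sufficiency, I would take an optimal $(\vec{p}^{\,*},\vec{q}^{\,*},\vec{u}^{\,*},\vec{s}^{\,*},\vec{w}^{\,*})$ with $\gamma^{*}\geq 0$ and construct an explicit generalized head-of-the-line processor-sharing policy that implements them: server $j$ devotes fraction $p^{*}_{(k,j)}$ of its processing capacity to queue $(k,j)$ and fraction $q^{*}_{(k,j)}$ of its bandwidth to queue $(k,j),c$; roots of type-$m$ jobs are sent to server $j$ with probability $u^{*}_{m\to j}$; once a type-$k$ task is processed in $(k,j)$ it stays on server $j$ (i.e.\ routes to $(k+1,j)$) with probability $s^{*}_{k,j\to j}$ and otherwise goes to $(k,j),c$; and from $(k,j),c$ it is forwarded to $(k+1,l)$ with probability $w^{*}_{k,j\to l}$. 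Using \eqref{eq:f}--\eqref{eq:rkjc}, the induced arrival rates to each virtual queue are exactly $r_{(k,j)}$ and $r_{(k,j),c}$, so in the fluid limit $\dot Y_{(k,j)}(t)\leq -\gamma^{*}\leq 0$ and $\dot Y_{(k,j),c}(t)\leq -\gamma^{*}\leq 0$ at every regular point. Starting from zero fluid levels, every $Y$ remains zero, which by \cite{dai1995positive} gives rate stability of the virtual queueing network, i.e.\ $\vec{\lambda}\in\Lambda'$.

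The main obstacle I expect is bookkeeping rather than any deep new idea: one must verify carefully that the recursion \eqref{eq:rkj_notroot} together with \eqref{eq:f} really does compute the fluid arrival rate into each non-root processing queue under \emph{any} policy, because the virtual queues couple the $\vec{u},\vec{s},\vec{w}$ routing decisions in a way that has no analogue in Proposition~\ref{optimal_LP}; in particular one needs to check, by induction along the chain, that the ``split then merge'' structure of Figures~\ref{fig:queue_root}--\ref{fig:queue_lastnode} is faithfully captured by the product form $(1-s_{k,j\to j})w_{k,j\to l}$ in \eqref{eq:f}. Once this step is done, the rest of the argument is a direct transcription of the fluid-stability reasoning in the proof of Proposition~\ref{optimal_LP}.
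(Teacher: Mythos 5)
Your necessity direction matches the paper closely and is fine: assume a stabilizing policy exists, pick a regular point where all fluid levels are zero, derive (by the chain induction) $r_{(k,j)}=\mu_{(k,j)}p_{(k,j)}$ and $r_{(k,j),c}=b_j q_{(k,j)}/c_k$, and contradict $\gamma^*<0$. That is exactly the paper's argument.

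The sufficiency direction has a genuine gap. You claim that under the HOL-PS policy with allocations $(\vec{p}^*,\ldots,\vec{w}^*)$ the fluid drift satisfies $\dot Y_{(k,j)}\leq -\gamma^*$ for every queue, but this does not follow from QNPP feasibility. In the fluid dynamics the paper actually writes down (see the expressions for $\dot X_{(k,j)}(t)$ when $k\notin\mathcal C$), the inflow to a non-root processing queue $(k,j)$ is a sum of the full departure \emph{capacities} of up to $J$ upstream queues weighted by $w^*$ and $s^*$. Since each upstream capacity can exceed its own nominal inflow by $\gamma^*$, the excess can accumulate: the inflow to $(k,j)$ can exceed $r_{(k,j)}$ by as much as $J\gamma^*$, while the outflow capacity $\mu_{(k,j)}p^*_{(k,j)}$ only needs to exceed $r_{(k,j)}$ by $\gamma^*$. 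A uniform slack of $\gamma^*$ per queue is therefore not enough to guarantee a non-positive drift. The paper resolves this with a carefully graded slack: it constructs $\epsilon_{(k,j)}$ and $\epsilon_{(k,j),c}$ proportional to the sequence $y[n]$ with $y[1]=1$, $y[n]=J(y[n-1]+1)$, so the slack grows geometrically (by a factor tied to $J$) as one moves down the chain, exactly absorbing the accumulated excess and yielding $\dot X^*_{(k,j)}(t)\leq -\gamma^*/y[K_{m(k)}]$ and $\dot X^*_{(k,j),c}(t)\leq -\gamma^*/y[K_{m(k)}]$. This non-uniform slack construction is the crux of the sufficiency proof and is absent from your sketch.

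One could try to rescue your argument by switching to a work-conserving fluid model (where an empty queue's output rate equals its input rate rather than its allocated capacity) and doing a chain induction to show all queues remain at zero. That would indeed establish weak stability of the virtual network on its own, but it is a different argument than the one the paper uses, it is not what your claimed drift bound $\dot Y_{(k,j)}\leq -\gamma^*$ says, and it would not produce the strictly negative per-queue drift bounds that the paper reuses verbatim in the proof of Theorem~\ref{optimal} to establish positive recurrence under Max-Weight in the interior of $\Lambda'$. You should either reproduce the geometric $\epsilon$-bookkeeping, or explicitly switch the fluid model and redo the induction, and in either case be aware that the uniform $-\gamma^*$ bound is false as written.
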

\begin{proof}
 We consider the virtual queueing network in the fluid limit. Define the amount of fluid in virtual queue corresponding to type-$k$ tasks processed by server $j$ as  $X_{(k,j)}(t)$. Similarly, define the amount of fluid in virtual queue corresponding to processed type-$k$ tasks sent by server $j$ as $X_{(k,j),c}(t)$. If $k \in \mathcal{C}$, the dynamics of the fluid are as follows
\begin{equation}
    X_{(k,j)}(t) = X_{(k,j)}(0) + \lambda_{m(k)}u_{m(k) \rightarrow j}t-\mu_{(k,j)}p_{(k,j)}t;
\end{equation}
else,
\begin{align}
X_{(k,j)}(t) & = X_{(k,j)}(0) + (\sum_{l \in [J] \backslash \{j\}}\frac{b_lq_{(k-1,l)}}{c_{k-1}}w_{k-1,l \rightarrow j} \nonumber \\
& +\mu_{(k-1,j)}p_{(k-1,j)}s_{k-1,j \rightarrow j}-\mu_{(k,j)}p_{(k,j)})t.
\end{align}
For $k \in [K] \backslash\mathcal{H}$, we have
\begin{equation}
X_{(k,j),c}(t) = X_{(k,j),c}(0) + (1-s_{k,j \rightarrow j})\mu_{(k,j)}p_{(k,j)}t - \frac{b_jq_{(k,j)}}{c_k}t.
\end{equation}
We suppose that $\gamma^{*}<0$. Let's show that the virtual queueing network is weakly unstable. In contrary, we suppose that there exists a scheduling policy such that under that policy for all $t \geq 0$, we have
\begin{align}
    & X_{(k,j)}(t)=0, \ \forall \ j, \ \forall \ k \\
    & X_{(k,j),c}(t)=0, \ \forall \ j, \ \forall \ k \in [K]\backslash \mathcal{H}.
\end{align}
Now, we pick a regular point $t_1$. Then, for all $k$ and all $j$, $\dot{X}_{(k,j)}(t_1) = 0$ which implies that 
there exit allocation vectors $\vec{p}$, $\vec{q}$, $\vec{u}$, $\vec{s}$, $\vec{w}$ such that 
\begin{align}
 &   \lambda_{m(k)}u_{m(k) \rightarrow j} - \mu_{(k,j)}p_{(k,j)} = 0, \ \forall \ j,\ \forall \ k \in \mathcal{C}\\
&\sum_{l \in [J] \backslash \{j\}}\frac{b_lq_{(k-1,l)}}{c_{k-1}}w_{k-1,l \rightarrow j } \nonumber +\mu_{(k-1,j)}p_{(k-1,j)}s_{k-1,j \rightarrow j} \\ &-\mu_{(k,j)}p_{(k,j)} = 0, \ \forall \  j, \ \forall \ k\in [K] \backslash \mathcal{C}.
\end{align}
Similarly, we have 
\begin{align}
    (1-s_{k,j \rightarrow j})\mu_{(k,j)}p_{(k,j)} - \frac{b_jq_{(k,j)}}{c_k} = 0 , \ \forall \  j, \ \forall \ k\in [K] \backslash \mathcal{H}. \nonumber
\end{align}
Now, we show that $r_{(k,j)} = \mu_{(k,j)}p_{(k,j)}$ for all $j$ and all $k$ by induction. First, consider an arbitrary $k \in \mathcal{C}$ which is the root node of job $m(k)$, then
\begin{align}
r_{(k,j)} = \lambda_{m(k)}u_{m(k) \rightarrow j} = \mu_{(k,j)}p_{(k,j)}, \ \forall \ j.
\end{align}
Then, we have 
\begin{align}
& r_{(k+1,j)} = \sum^J_{l=1} r_{(k,l)} f_{k,l \rightarrow j} = \sum^J_{l=1}\mu_{(k,l)}p_{(k,l)}f_{k,l \rightarrow j}\\
= &  \sum_{l \in [J] \backslash \{j\}}\mu_{(k,l)}p_{(k,l)}(1-s_{k,l \rightarrow l})w_{k,l \rightarrow j} + \mu_{(k,j)}p_{(k,j)}s_{k,j \rightarrow j}\\
= &  \sum_{l \in [J] \backslash \{j\}}\frac{b_lq_{(k,l)}}{c_{k}}w_{k,l \rightarrow j} + \mu_{(k,j)}p_{(k,j)}s_{k,j \rightarrow j}\\ =& \mu_{(k+1,j)}p_{(k+1,j)}.
\end{align}
By induction, we have $r_{(k,j)} = \mu_{(k,j)}p_{(k,j)}$ for all $j$ and all $k \in \mathcal{I}_m$. Thus, we aslo have 
\begin{align}
     r_{(k,j),c} & =   r_{(k,j)}(1-s_{k,j \rightarrow j}) = \mu_{(k,j)}p_{(k,j)}(1-s_{k,j \rightarrow j})\\& = \frac{b_jq_{(k,j)}}{c_k}, \ \forall \ j, \ \forall \ k \in [K]\backslash \mathcal{H}.
\end{align}
which contradicts $\gamma^{*}<0$.

Now, we suppose that $\gamma^{*} \geq 0$. It follows that there exist vectors $\vec{p}^{\,*}$, $\vec{q}^{\,*}$, $\vec{u}^{\,*}$, $\vec{s}^{\,*}$, $\vec{w}^{\,*}$, $\epsilon_{(k,j)}$ and $\epsilon_{(k,j),c}$ such that   
\begin{align}
& \mu_{(k,j)}p^*_{(k,j)} = r^*_{(k,j)}+\epsilon_{(k,j)}, \ \forall \ j,\ \forall \ k; \label{eq:opt_rkj}\\
& \frac{b_jq^*_{(k,j)}}{c_k} =r^*_{(k,j),c}+\epsilon_{(k,j),c}, \ \forall \  j, \ \forall \ k\in [K] \backslash \mathcal{H}; \label{eq:opt_rkjc}\\
& \epsilon_{(k,j)} = 
\frac{y[k-\sum^{m(k)-1}_{m^{'}=1}K_{m^{'}}-1]}{y[K_{m(k)}]}\gamma^{*}, \ \forall \ j, \ \forall \  k;\\
& \epsilon_{(k,j),c} = \epsilon_{(k,j)}+\frac{1}{y[K_{m(k)}]}\gamma^{*}, \ \forall \ j , \ \forall \ k\in [K] \backslash \mathcal{H}.
\end{align}
where the sequence $y[n]$ satisfies the recurrence relationship: \begin{align}
    y[1] = 1; \ y[n] = J(y[n-1] + 1), \forall \ n > 1.
\end{align}
%$\epsilon_{(k,j)}$ and $\epsilon_{(k,j),c}$ are defined as follows:
%\begin{align}
%& \epsilon_{(k,j)} = \frac{(k-\sum^{m(k)-1}_{m^{'}=1}K_{m^{'}}-1)J+1}{(K_{m(k)}-1)J+1} \gamma^{*}, \ \forall \ j, \ \forall \  k;\\
%& \epsilon_{(k,j),c} = \frac{1}{(K_{m(k)}-1)J+1} \gamma^{*}, \ \forall \ j , \ \forall \ k\in [K] \backslash \mathcal{H}.
%\end{align}
Therefore, if $k \in \mathcal{C}$, then
\begin{align}
& \dot{X}^{*}_{(k,j)}(t) = \lambda_{m(k)}u^*_{m(k) \rightarrow j} - \mu_{(k,j)}p^*_{(k,j)} \label{eq:root_X}\\
 = & r^*_{(k,j)} - \mu_{(k,j)}p^*_{(k,j)} = - \epsilon_{(k,j)} \leq 0, \ \forall \ j, \ \forall \ t >0 \label{eq:derivative_Xkj_root};
\end{align}
else,
\begin{align}
& \dot{X}^{*}_{(k,j)}(t)  =  \sum_{l \in [J] \backslash \{j\}}\frac{b_lq^*_{(k-1,l)}}{c_{k-1}}w^*_{k-1,l \rightarrow j } \nonumber \\& +\mu_{(k-1,j)}p^*_{(k-1,j)}s^*_{k-1,j \rightarrow j}-\mu_{(k,j)}p^*_{(k,j)} \label{eq:notroot_X}\\
& = \sum_{l \in [J] \backslash \{j\}}r^{*}_{(k-1,l),c}w^*_{k-1,l \rightarrow j }+r^*_{(k-1,j)}s^*_{k-1,j \rightarrow j}-r^*_{(k,j)} \nonumber \\
& + \sum_{l \in [J] \backslash \{j\}}\epsilon_{(k-1,l),c}w^{*}_{k-1,l \rightarrow j}+\epsilon_{(k-1,j)}s^{*}_{k-1,j \rightarrow j} - \epsilon_{(k,j)}\label{eq:notroot_X2}.
\end{align}
Note that (\ref{eq:notroot_X}) to (\ref{eq:notroot_X2}) follows from (\ref{eq:opt_rkj}) and (\ref{eq:opt_rkjc}). In (\ref{eq:notroot_X2}), we have
\begin{align}
 & \sum_{l \in [J] \backslash \{j\}}  r^*_{(k-1,l)}(1-s^{*}_{k-1,l \rightarrow l})w^*_{k-1,l \rightarrow j} \nonumber \\
& +r^*_{(k-1,j)}f^*_{k-1,j \rightarrow j}-r^*_{(k,j)} \label{eq:notroot_X3}\\
& = \sum_{l \in [J] \backslash \{j\}} r^*_{(k-1,l)}f^*_{k-1,l \rightarrow j}+r^*_{(k-1,j)}f^*_{k-1,j \rightarrow j}-r^*_{(k,j)} \label{eq:notroot_X4} \\
& = \sum_{1 \leq l \leq J}r^*_{(k-1,l)}f^*_{k-1,l \rightarrow j}-r^*_{(k,j)} = 0,  \label{eq:notroot_X5}
\end{align}
and
\begin{align}
 &\sum_{l \in [J] \backslash \{j\}}\epsilon_{(k-1,l),c}w^{*}_{k-1,l \rightarrow j}+\epsilon_{(k-1,j)}s^{*}_{k-1,j \rightarrow j} - \epsilon_{(k,j)} \label{eq:notroot_X_epsilon1}\\
& \leq \sum_{l \in [J] \backslash \{j\}}\epsilon_{(k-1,l),c}+\epsilon_{(k-1,j)} - \epsilon_{(k,j)} \label{eq:notroot_X_epsilon2} = - \frac{\gamma^{*}}{y[K_{m(k)}]}.
\end{align}
Note that (\ref{eq:notroot_X3}) to (\ref{eq:notroot_X4}) follows from (\ref{eq:f}); and (\ref{eq:notroot_X4}) to (\ref{eq:notroot_X5}) follows from (\ref{eq:rkj_notroot}); (\ref{eq:notroot_X_epsilon1}) to (\ref{eq:notroot_X_epsilon2}) is because $w^*_{k-1,l \rightarrow j}\leq 1$ and $s^*_{k-1,j \rightarrow j} \leq 1$. Thus, (\ref{eq:notroot_X}) can be written as 
\begin{align}
    \dot{X}^{*}_{(k,j)}(t) \leq - \frac{\gamma^{*}}{y[K_{m(k)}]} \leq 0, \ \forall \ j, \ \forall \ t >0 \label{eq:derivative_X_kj}.
\end{align}
For $k \in [K] \backslash \mathcal{H}$, we have
\begin{align}
&\dot{X}^{*}_{(k,j),c}(t) = (1-s^*_{k,j \rightarrow j})\mu_{kj}p^*_{(k,j)} - \frac{b_jq^*_{(k,j)}}{c_k} \label{eq:com_X}\\
                  & = (1-s^*_{k,j \rightarrow j})(r^*_{(k,j)} +\epsilon_{(k,j)})- r^*_{(k,j),c}-\epsilon_{(k,j),c}\\
                  & = (1-s^*_{k,j \rightarrow j})\epsilon_{(k,j)}-\epsilon_{(k,j),c}\\
                  & \leq \epsilon_{(k,j)}-\epsilon_{(k,j),c}
                  = - \frac{\gamma^{*}}{y[K_{m(k)}]} \leq 0 \label{eq:derivative_X_kjc}.
\end{align}
If $X_{(k,j)}(0)=0$ for all $k$ and all $j$, then $X_{(k,j)}(t)=0$ for all $t\geq 0$, all $j$ and all $k$. Also, if $X_{(k,j),c}(0)=0$ for all $k \in [K] \backslash \mathcal{H}$ and all $j$, then $X_{(k,j),c}(t)=0$ for all $t\geq 0$, all $j$ and all $k$. Thus, the virtual queueing network is weakly stable, i.e., the queueing network process is rate stable.
 \end{proof}
\section{Throughput-Optimal Policy}\label{sec:optimal}
In this section, we propose Max-Weight scheduling policy for the network of virtual queues in Section \ref{sec:queue} and show that it is throughput-optimal for the network of the original scheduling problem. 

The proposed virtual queueing network is quite different from traditional queueing networks since the proposed network captures the communication procedures (routing of tasks determined by scheduling policies) in the network. Therefore, it is not clear that the capacity region characterized by QNPP is equivalent to the capacity region characterized by SPP. To prove the throughput-optimality of Max-Weight policy for the original scheduling problem, we first need to show the equivalence of capacity regions characterized by SPP and QNPP. Then, under the Max-Weight policy, we consider the queueing network in the fluid limit, and using a Lyapunov argument, we show that the fluid model of the virtual queueing network is weakly stable for all arrival vectors in the capacity region, and stable for all arrival vectors in the interior of the capacity region. 

Now, we give a description of the Max-Weight policy for the proposed virtual queueing network. Given virtual queue-lengths $Q^n_{(k,j)}$, $Q^n_{(k,j),c}$ and history $\mathcal{F}^n$ at time $n$, Max-Weight policy allocates the vectors $\vec{p}$, $\vec{q}$, $\vec{u}$, $\vec{s}$ and $\vec{w}$ that are\footnote{We define $\mathcal{F}^n$ to be the $\sigma$-algebra generated by all the random variables in the system up to time $n$.}
\begin{align*}
&\arg \max_{\vec{p}, \vec{q}, \vec{u}, \vec{s}, \vec{w}} -(\vec{Q}^{\,n})^{T} E[\Delta \vec{Q}^{\,n}|\mathcal{F}^n]-(\vec{Q}^{\,n}_c)^{T} E[\Delta \vec{Q}^{\,n}_c|\mathcal{F}^n]\\
=& \arg \min_{\vec{p}, \vec{q}, \vec{u}, \vec{s}, \vec{w}} (\vec{Q}^{\,n})^{T} E[\Delta \vec{Q}^{\,n}|\mathcal{F}^n]+(\vec{Q}^{\,n}_c)^{T} E[\Delta \vec{Q}^{\,n}_c|\mathcal{F}^n],
\end{align*}
where $\vec{Q}^{\,n}$ and $\vec{Q}^{\,n}_c$ are the vectors of queue-lengths $Q^n_{(k,j)}$ and $Q^n_{(k,j),c}$ at time $n$. The Max-Weight policy is the choice of $\vec{p}$, $\vec{q}$, $\vec{u}$, $\vec{s}$ and $\vec{w}$ that minimizes the drift of a Lyapunov function $V^n=\sum_{k,j}(Q^n_{(k,j)})^2+\sum_{k,j}(Q^n_{(k,j),c})^2$.

The following theorem shows the throughput-optimality of Max-Weight policy. 
\begin{theorem}
\label{optimal}
Max-Weight policy is throughput-optimal for the network, i.e. Max-Weight policy is rate stable for all the arrival vectors in the capacity region $\Lambda$ defined in Proposition \ref{optimal_LP}, and it makes the underlying Markov process positive recurrent for all the arrival rate vectors in the interior of $\Lambda$. 
\end{theorem}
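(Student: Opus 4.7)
The proof splits naturally into two pillars, mirroring the discussion that precedes the theorem: first the equivalence of the two capacity regions, $\Lambda = \Lambda^{'}$, so that stabilizing the virtual queueing network is equivalent to stabilizing the original system; and second a Foster--Lyapunov argument in the fluid limit showing that Max-Weight drives the virtual queue vector to zero. The plan is to carry these out and then invoke Dai's fluid-stability theorem \cite{dai1995positive} to convert weak/strong fluid stability into rate stability on $\Lambda$ and positive recurrence on the interior of $\Lambda$.

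For the equivalence, the direction $\Lambda^{'} \subseteq \Lambda$ is a direct summation argument: starting from a feasible $(\vec{p},\vec{q},\vec{u},\vec{s},\vec{w})$ for QNPP, I would sum the nominal-rate constraint $r_{(k,j)} \leq \mu_{(k,j)} p_{(k,j)}$ over $j$ and use a short induction along each chain via (\ref{eq:rkj})--(\ref{eq:rkjc}) to obtain $\sum_j r_{(k,j)} = \nu_k(\vec{\lambda})$, yielding the SPP service constraint; an analogous accounting on the communication queues gives the SPP bandwidth constraint. The converse $\Lambda \subseteq \Lambda^{'}$ is the delicate step: given $(\vec{p},\vec{q})$ feasible for SPP, I would explicitly construct routing vectors by setting $u_{m \to j} = \mu_{(k,j)}p_{(k,j)}/\lambda_m$ at the root node $k$ of chain $m$, and for each non-root node choosing $s_{k,j \to j}$ so as to keep as much flow local as possible (taking $s_{k,j\to j}$ equal to $\min\{\mu_{(k,j)} p_{(k,j)},\mu_{(k+1,j)} p_{(k+1,j)}\}/(\mu_{(k,j)} p_{(k,j)})$), then solving a small transportation problem to obtain $w_{k,j \to l}$ on the simplex (\ref{eq:w_lj}) that absorbs the remaining surplus. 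Feasibility of this transportation problem is exactly what the SPP bandwidth inequality guarantees.

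For the Max-Weight side, I would take the quadratic Lyapunov function $V(\vec{Q}) = \sum_{k,j}Q_{(k,j)}^{2}+\sum_{k,j}Q_{(k,j),c}^{2}$ and compute $\mathbb{E}[\Delta V \mid \mathcal{F}^{n}]$. Standard manipulation gives a bounded term plus $2(\vec{Q}^{\,n})^{T}\mathbb{E}[\Delta \vec{Q}^{\,n}\mid \mathcal{F}^{n}] + 2(\vec{Q}^{\,n}_{c})^{T}\mathbb{E}[\Delta \vec{Q}^{\,n}_{c}\mid \mathcal{F}^{n}]$, and by definition Max-Weight minimizes exactly that expression over feasible $(\vec{p},\vec{q},\vec{u},\vec{s},\vec{w})$. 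Passing to the fluid limit, Max-Weight therefore yields a drift no larger than the drift produced by any other feasible allocation. For $\vec{\lambda}$ in the interior of $\Lambda = \Lambda^{'}$, the constructive allocation $(\vec{p}^{\,*},\vec{q}^{\,*},\vec{u}^{\,*},\vec{s}^{\,*},\vec{w}^{\,*})$ used in the proof of Proposition~\ref{optimal_QNPP} (with the recursively defined slacks $\epsilon_{(k,j)},\epsilon_{(k,j),c}$ driven by $\gamma^{*}>0$) produces componentwise negative drift bounded by $-\gamma^{*}/y[K_{m(k)}]$ on every fluid component; plugging this benchmark into the Max-Weight bound gives $\dot V(t) \leq -\eta \lVert \vec{X}(t) \rVert_{1}$ for some $\eta > 0$, hence strong stability of the fluid model and positive recurrence of the Markov chain. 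For $\vec{\lambda}$ on the boundary ($\gamma^{*}=0$), the same comparison delivers $\dot V(t) \leq 0$, hence weak stability and thus rate stability.

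I expect the hard part to be the reverse inclusion $\Lambda \subseteq \Lambda^{'}$: unlike in standard flexible queueing networks, here the routing of processed tasks is itself a scheduling decision rather than an exogenous probability, so one must actually exhibit $\vec{u},\vec{s},\vec{w}$ that realize the rates promised by SPP and lie in the correct simplices, and degenerate cases (e.g.\ $\mu_{(k,j)}p_{(k,j)} = 0$ or $b_{j}q_{(k,j)} = 0$) must be treated by convention so that the induced $w_{k,j\to l}$ remain well defined. A secondary subtlety is that the Max-Weight comparison requires the benchmark $(\vec{p}^{\,*},\dots)$ to be chosen uniformly in the state $\vec{X}$; I would fix it once using a vector $\vec{\lambda}+\varepsilon\vec{1}$ in the interior so that the resulting $\gamma^{*}$ and hence $\eta$ do not depend on $\vec{X}$.
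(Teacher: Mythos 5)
Your proposal is correct and follows essentially the same route as the paper's proof: establish $\Lambda=\Lambda'$ as a key lemma (the paper's Lemma~\ref{equivalence}), then run a quadratic Lyapunov / fluid-limit argument comparing the Max-Weight drift against a fixed benchmark allocation from the QNPP. Your explicit construction for $\Lambda\subseteq\Lambda'$ even coincides with the paper's: your choice $s_{k,j\to j}=\min\{\mu_{(k,j)}p_{(k,j)},\mu_{(k+1,j)}p_{(k+1,j)}\}/\mu_{(k,j)}p_{(k,j)}$ is exactly the paper's case split over the deficit set $\mathcal{D}_k$, and the "transportation problem" for $\vec w$ is realized in the paper by the explicit proportional-sharing formula (\ref{eq:s_kjl}); your remark that SPP feasibility guarantees this transportation problem is solvable is the same flow-conservation fact ($\nu_k=\nu_{k+1}$) the paper uses to close that step.
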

\begin{proof}In order to prove Theorem 1, we first state Lemma \ref{equivalence} which is proved in Appendix \ref{appendix:lemma1}.
\begin{lemma}
\label{equivalence}
The capacity region characterized by static planning problem (SPP) is equivalent to the capacity region characterized by queueing network planning problem (QNPP), i.e. $\Lambda=\Lambda^{'}$.
\end{lemma}
Having Lemma \ref{equivalence}, we now show that the queueing network is rate stable for all arrival vectors in $\Lambda^{'}$, and strongly stable for all arrival vectors in the interior of $\Lambda^{'}$ under Max-Weight policy. 

We consider the problem in the fluid limit. Define the amount of fluid in virtual queue corresponding to type-$k$ tasks  by server $j$ as  $X_{(k,j)}(t)$. Similarly, define the amount of fluid in virtual queue corresponding to processed type-$k$ tasks sent by server $j$ as $X_{(k,j),c}(t)$.

Now, we define $\gamma^{*}$ as the optimal value of QNPP. If we consider a rate vector $\vec{\lambda}$ in the interior of $\Lambda^{'}$, then $\gamma^{*}>0$. Directly from (\ref{eq:derivative_Xkj_root}), (\ref{eq:derivative_X_kj}) and (\ref{eq:derivative_X_kjc}), for $t>0$, we have
\begin{align}
    & \dot{X}^{*}_{(k,j)}(t)  < 0, \ \forall \ j, \forall \ k; \label{eq:X_kj_negative}\\
    & \dot{X}^{*}_{(k,j),c}(t)  < 0, \ \forall \ j, \ \forall \ k \in [K]\backslash \mathcal{H}. \label{eq:X_kjc_negative}
\end{align}
Now, we take $V(t)=\frac{1}{2}\vec{X}^T(t)\vec{X}(t)+\frac{1}{2}\vec{X}^T_c(t)\vec{X}_c(t)$ as the Lyapunov function where $X(t)$ and $X_c(t)$ are vectors for $X_{(k,j)}(t)$ and $X_{(k,j),c}(t)$ respectively. The drift of $V$ by using Max-Weight policy is
\begin{align}
  & \dot{V}_{\text{Max-Weight}}(t) =  \min_{\vec{p}, \vec{q}, \vec{u}, \vec{s}, \vec{w}} \vec{X}^T(t)\vec{\dot{X}}(t) + \vec{X}_c^T(t)\vec{\dot{X}}_c(t) \label{eq:maxweight}\\
  & \leq (\vec{X}^{*})^T(t)(\vec{\dot{X}}^*)(t)+(\vec{X}^*_c)^T(t)\vec{\dot{X}}^*_c(t) < 0, \ \forall \ t >0, 
\end{align}
using (\ref{eq:X_kj_negative}) and (\ref{eq:X_kjc_negative}). Thus, for $t >0$, we show that $\dot{V}_{\text{Max-Weight}}(t) < 0$ if $\vec{\lambda}$ in the interior of $\Lambda^{'}$. This proves that the fluid model is stable under Max-Weight policy, which implies the positive recurrence of the underlying Markov chain \cite{dai1995positive}.

Consider a vector $\vec{\lambda} \in \Lambda^{'}$, there exist allocation vectors $\vec{p}^{\,*}$, $\vec{q}^{\,*}$, $\vec{u}^{\,*}$, $\vec{s}^{\,*}$ and $\vec{w}^{\,*}$ such that
\begin{align}
& \mu_{(k,j)}p^*_{(k,j)} = r^*_{(k,j)}, \ \forall \ j,\ \forall \ k.\\
& \frac{b_jq^*_{(k,j)}}{c_k} =r^*_{(k,j),c}, \ \forall \  j, \ \forall \ k\in [K] \backslash \mathcal{H}
\end{align}
From (\ref{eq:root_X}) and (\ref{eq:notroot_X}), we have 
\begin{align}
    \dot{X}^{*}_{(k,j)}(t) = 0, \ \forall \ j, \ \forall \ k, \ \forall \ t >0.
\end{align}
From (\ref{eq:com_X}), we have 
\begin{align}
    \dot{X}^{*}_{(k,j),c}(t) = 0, \ \forall \ j, \ \forall \ k\in [K] \backslash \mathcal{H}, \ \forall \ t >0. 
\end{align}
From (\ref{eq:maxweight}), for $t >0$, we have $\dot{V}_{\text{Max-Weight}}(t) \leq 0$ if $\vec{\lambda} \in \Lambda^{'}$. If $\vec{X}(0)=\vec{0}$ and $\vec{X}_c(0)=\vec{0}$, we have $\vec{X}(t)=\vec{0}$ and $\vec{X}_c(t)=\vec{0}$ for all $t \geq 0$, which shows that the fluid model is weakly stable under Max-Weight Policy, i.e., the queueing network process is rate stable. Therefore, Max-Weight policy is throughput-optimal for the queueing network which completes the proof.
 \end{proof}
\begin{remark}
In the proof of Theorem \ref{optimal}, the most significant part is to prove Lemma \ref{equivalence} which shows that capacity region of the original scheduling problem (characterized by a LP) is equivalent to the capacity region of the proposed virtual queueing network (characterized by a complicated mathematical optimization problem). In \cite{pedarsani2014scheduling,pedarsani2017robust}, without communication constraints, the capacity regions of original problem (denoted by $\tilde{\Lambda}$) and the virtual queueing network (denoted by $\tilde{\Lambda}^{'}$) are characterized by LPs. Given a $\vec{\lambda} \in \tilde{\Lambda}$ with corresponding allocation vector, one can construct a feasible allocation vector for virtual queueing network supporting $\vec{\lambda}$ by splitting the traffic equally from a queue into the following branching queues. In Lemma \ref{equivalence}, to prove $\Lambda \subseteq \Lambda^{'}$, we construct feasible vectors for virtual queueing network supporting $\vec{\lambda}$ by splitting traffic differently from a queue into following branching queues through a careful and clever design, which is fundamentally different from \cite{pedarsani2014scheduling,pedarsani2017robust}.
\end{remark}
\section{Complexity of throughput-optimal policy}\label{sec:complexity}
In the previous section, we showed the throughput-optimality of the scheduling policy, but it is not clear how complex it is to implement the policy. In this section, we describe how the Max-Weight policy is implemented and show that the Max-Weight policy has complexity which is almost linear in the number of virtual queues.  

First, we denote $p^n_{(k,j)} \in \{0,1\}$ as the decision variable that server $j$ processes task $k$ in $(k,j)$ at time $n$, i.e.,
\begin{align*}
    p^n_{(k,j)} =
     \begin{cases}
     1 & \text{server $j$ processes task $k$ in $(k,j)$ at time $n$}\\
     0 & \text{otherwise}.\\
     \end{cases}
\end{align*}
We denote $q^n_{(k,j)} \in \{0,1\}$ as the decision variable that server $j$ sends the data of processed task $k$ in $(k,j),c$ at time $n$, i.e.,
\begin{align*}
    q^n_{(k,j)} =
     \begin{cases}
     1 & \text{server $j$ sends the data of processed task $k$}\\
     & \text{in $(k,j),c$ at time $n$}\\
     0 & \text{otherwise}.\\
     \end{cases}
\end{align*}
Then, we define $\vec{p}^{\,n}$ and $\vec{q}^{\,n}$ to be decision vectors for $p^n_{(k,j)}$ and $q^n_{(k,j)}$ respectively at time $n$.

Given virtual queue-lengths $\vec{Q}^{\,n}$, $\vec{Q}^{\,n}_c$ and history $\mathcal{F}^n$ at time $n$, Max-Weight policy minimizes 
\begin{equation}
    (\vec{Q}^{\,n})^{T} E[\Delta \vec{Q}^{\,n}|\mathcal{F}^n]+(\vec{Q}^{\,n}_c)^{T} E[\Delta \vec{Q}^{\,n}_c|\mathcal{F}^n]
\end{equation}
over the vectors $\vec{p}^{\,n}$, $\vec{q}^{\,n}$, $\vec{u}^{\,n}$, $\vec{s}^{\,n}$ and $\vec{w}^{\,n}$. That is, Max-Weight policy minimizes 
\begin{align}
     & \sum^J_{j=1} \sum_{k \in \mathcal{C}}Q^n_{(k,j)}(\lambda_{m(k)}u^n_{m(k) \rightarrow j}-\mu_{(k,j)}p^n_{(k,j)}) \nonumber \\
 & +  \sum^J_{j=1} \sum_{k \notin \mathcal{C}} Q^n_{(k,j)}(\sum_{l \in [J] \backslash \{j\}}\frac{b_lq^n_{(k-1,l)}}{c_{k-1}}w^n_{k-1,l \rightarrow j} \nonumber \\
 & +\mu_{(k-1,j)}p^n_{(k-1,j)}s^n_{k-1,j \rightarrow j}-\mu_{(k,j)}p^n_{(k,j)}) \nonumber\\
  & +  \sum^J_{j=1} \sum_{k \in [K]\backslash \mathcal{H}}Q^n_{(k,j),c} \{(1-s^n_{k,j \rightarrow j})\mu_{(k,j)}p^n_{(k,j)} -\frac{b_jq^n_{(k,j)}}{c_k}\} \nonumber
  \end{align}
which can be rearranged to 
 \begin{align}
& \sum^J_{j=1}\sum_{k \in [K]\backslash \mathcal{H}}q^n_{(k,j)}\frac{b_j}{c_k}(\sum_{l \in [J] \backslash \{j\}}w^n_{k,j \rightarrow l}Q^n_{(k+1,l)} -Q^n_{(k,j),c}) \nonumber\\ & + \sum^J_{j=1} \sum_{k \in [K]\backslash \mathcal{H}} p^n_{(k,j)} \mu_{(k,j)} \{s^n_{k,j \rightarrow j}(Q^n_{(k+1,j)} -Q^n_{(k,j),c}) \nonumber \\ &+Q^n_{(k,j),c}-Q^n_{(k,j)}\} - \sum^J_{j=1}\sum_{k \in \mathcal{H}} p^n_{(k,j)} \mu_{(k,j)} Q^n_{(k,j)} \nonumber \\& + \sum_{k\in \mathcal{C}} \lambda_{m(k)} \sum^J_{j=1} Q^n_{(k,j)}u^n_{m(k) \rightarrow j} \nonumber\\
= &\sum^J_{j=1}\sum^K_{k=1} p^n_{(k,j)}F^n_{(k,j)} + \sum^J_{j=1}\sum_{k \in [K]\backslash \mathcal{H}} q^n_{(k,j)}G^n_{(k,j)} \nonumber \\ & + \sum_{k\in \mathcal{C}} \lambda_{m(k)} \sum^J_{j=1} Q^n_{(k,j)}u^n_{m(k) \rightarrow j}. \label{eq:objective_2}
\end{align}
Note that the function $F^n_{(k,j)}$ is defined as follows: 
\begin{align}
F^n_{(k,j)} & = \mu_{(k,j)} \{s^n_{k,j \rightarrow j}(Q^n_{(k+1,j)}-Q^n_{(k,j),c}) \nonumber \\&+Q^n_{(k,j),c}-Q^n_{(k,j)}\}, \ \forall \ j, \ \forall \ k \in [K] \backslash  \mathcal{H}; 
\end{align}
else
\begin{align}
    F^n_{(k,j)} = - \mu_{(k,j)} Q^n_{(k,j)}, \ \forall \ j, \ \forall \ k \in \mathcal{H}.
\end{align}
The function $G^n_{(k,j)}$ is defined as follows:
\begin{align}
G^n_{(k,j)}= & \frac{b_j}{c_k}(\sum_{l \in [J] \backslash \{j\}}w^n_{k,j \rightarrow l}Q^n_{(k+1,l)}-Q^n_{(k,j),c}), \nonumber \\ 
& \forall \ k \in [K]\backslash \mathcal{H}.
\end{align}
For (\ref{eq:objective_2}), we first minimize 
\begin{align}
    \sum_{k\in \mathcal{C}} \lambda_{m(k)} \sum^J_{j=1} Q^n_{(k,j)}u^n_{m(k) \rightarrow j} \label{eq:U}
\end{align}
 over $\vec{u}^{\,n}$. We denote $j^n_u(k) = \arg \min_{j \in [J]}Q^n_{(k,j)}$, $\forall \ k \in \mathcal{C}$. It is clear that the minimizer $\vec{u}^{\,n*}$ is  
\begin{align}
    u^{n*}_{m(k) \rightarrow j}=
    \begin{cases}
    1 \ \text{if} \ j = j^n_u(k) \\
    0  \ \text{otherwise}.
    \end{cases}
\end{align}
When more than one queue-lengths are minima, we use a random tie-breaking rule.
Secondly, we minimize
\begin{align}
    \sum^J_{j=1}\sum^K_{k=1} p^n_{(k,j)}F^n_{(k,j)} \label{eq:F}
\end{align}
over $\vec{p}^{\,n}$ and $\vec{s}^{\,n}$. For each $j \in [J]$ and $k \in [K]\backslash \mathcal{H}$, the minimizer $s^{*}_{k,j \rightarrow j}$ of the function $F^n_{(k,j)}$ is 
\begin{align}
s^{n*}_{k,j \rightarrow j}=
\begin{cases}
1 \ \text{if} \ Q^n_{(k+1,j)}-Q^n_{(k,j),c} \leq 0\\
0 \ \text{if} \ Q^n_{(k+1,j)}-Q^n_{(k,j),c} >0.
\end{cases}
\end{align}
We denote $F^{n*}_{(k,j)}$ as minima of $F^n_{(k,j)}$ for each $j \in [J]$ and $k \in [K]$. Then, we define $k^n_F(j)$ as follows:
\begin{align}
    k^n_F(j) = & \arg\min_{k \in [K]} F^{n*}_{(k,j)}, \ \forall \ j \in [J]. \label{kf}
\end{align}
To minimize (\ref{eq:F}), we have 
\begin{align}
p^{n*}_{(k,j)}=
\begin{cases}
1 \ \text{if} \ k = k^n_F(j)\\
0 \ \text{otherwise}
\end{cases}
\end{align}
for each $j \in [J]$.
Lastly, we minimize 
\begin{align}
    \sum^J_{j=1}\sum_{k \in [K]\backslash \mathcal{H}}q^n_{(k,j)}G^n_{(k,j)} \label{eq:G}
\end{align}
over $\vec{q}^{\,n}$ and $\vec{w}^{\,n}$. We denote $j^n_w(k) = \arg \min_{l \in [J] \backslash \{j\}}Q^n_{(k+1,l)}$, $\forall \ k \in [K] \backslash \mathcal{H}$. For each $j \in [J]$ and $k \in [K]\backslash \mathcal{H}$, the minimizer of $G^n_{(k,j)}$ is 
\begin{align}
w^{n*}_{k,j \rightarrow l}=
\begin{cases}
1 \ \text{if} \ l = j^n_w(k)\\
0 \ \text{otherwise}.
\end{cases}
\end{align}
We denote $G^{n*}_{(k,j)}$ as minima of $G^n_{(k,j)}$ for each $j \in [J]$ and $k \in [K]\backslash \mathcal{H}$. Then, we define $k^n_G(j)$ as follows:
\begin{align}
    k^n_G(j) = & \arg\min_{k \in [K] \backslash \mathcal{H}} G^{n*}_{(k,j)}, \ \forall \ j \in [J]. \label{kg}
\end{align}
To minimize (\ref{eq:G}), we have
\begin{align}
q^{n*}_{(k,j)}=
\begin{cases}
1 \ \text{if} \ k = k^n_G(j)\\
0 \ \text{otherwise}
\end{cases}
\end{align}
for each $j \in [J]$.
Based on the optimization above, we describe how the Max-Weight policy is implemented. We consider the virtual queueing network at each time $n$. The procedures of minimizing (\ref{eq:U}) show that when a new job comes to the network, the root task $k$ is sent to virtual queue $(k,j)$ if queue length of $(k,j)$ is the shortest. The procedures of minimizing (\ref{eq:F}) show that server $j$ processes task $k$ in virtual queue $(k,j)$ if $p^{n*}_{(k,j)}=1$; and then the output of processed task $k$ is sent to virtual queue $(k+1,j)$ if $s^{n*}_{k,j \rightarrow j}=1$ or sent to virtual queue $(k,j),c$ if $s^{n*}_{k,j \rightarrow j}=0$. The procedures of minimizing (\ref{eq:G}) shows that server $j$ sends output of processed task $k$ in virtual queue $(k,j),c$ to virtual queue $(k+1,j^n_w(k))$ iff $q^{n*}_{(k,j)}=1$ and $w^{n*}_{k,j \rightarrow j^n_w(k)}=1$.
 By the analysis above, we know that the complexity of the Max-Weight policy is dominated by the procedure of sorting some linear combinations of queue lengths. As we know, the complexity of sorting algorithm is $N \log{N}$. To minimize (\ref{eq:objective_2}), there are $M$ procedures of sorting $J$ values for $Q^n_{(k,j)}$ if $k \in \mathcal{C}$, $K-M$ procedures of sorting $J-1$ values for $Q^n_{(k+1,l)}$, $J$ procedures of sorting $K-M$ values for $G^{n*}_{(k,j)}$, and $J$ procedures of sorting $K$ values for $F^{n*}_{(k,j)}$. Thus, the complexity of the Max-Weight policy is bounded above by $2KJ \log{K}+ KJ \log{J}$ which is almost linear in the number of virtual queues. Note that the number of the virtual queues in the network proposed in Section \ref{sec:queue} is $(2K-M)J$.
\section{Towards more general Computing Model}\label{sec:dag}
 \begin{figure}[t]
    \centering
    \includegraphics[width=\columnwidth]{DAG_and_network.pdf}
    \caption{Overview of DAG scheduling for dispersed computing.}\label{fig:DAG}
\end{figure}
In this section, we extend our framework to a more general computing model, where jobs are modeled as \textit{directed acyclic graphs (DAG)}, which capture more complicated dependencies among tasks. As shown in Fig. \ref{fig:DAG}, nodes of the DAG represent tasks of the job and edges represent the logic dependencies between tasks. Different from the model of chains, tasks in the model of DAGs might have more than one parent node, e.g., task $4$ in Fig. \ref{fig:DAG} has two parent nodes, task $2$ and task $3$.

One major challenge in the communication-aware DAG scheduling is that the data of processed parents tasks have to be sent to the same server for processing the child task. This \textit{logic dependency} difficulty for communications doesn't appear in the model of chains for computations because tasks of a chain can be processed one by one without the need for merging the processed tasks. Due to logic dependency difficulty incurred in the model of DAGs, designing a virtual queueing network which encodes the state of the network is more difficult.

Motivated by broadcast channel model in many areas (e.g. wireless network, message passing interface (MPI) and shared bus in computer networks), we simplify the network to a \textit{broadcast network}\footnote{Note that the communication-aware DAG scheduling in a general network without broadcast constraints would become intractable since all the tasks have to be tracked in the network for the purpose of merging, which can highly increase the complexity of scheduling policies.} in which the servers always broadcast the output of processed tasks to other servers. Inspired by \cite{pedarsani2014scheduling}, we propose a novel virtual queueing network to solve the logic dependency difficulty for communications, i.e., guarantee that the results of preceding tasks will be sent to the same server. Lastly, we propose the Max-Weight policy and show that it is throughput-optimal for the network.  
\subsection{DAG Computing Model}
For the DAG scheduling problem, we define the following terms. As shown in Fig. \ref{fig:DAG}, each job is modeled as a DAG. Let $(\mathcal{V}_m,\mathcal{E}_m,\{c_k\}_{k \in \mathcal{V}_m})$ be the DAG corresponding to the job of type $m$, $m \in [M]$, where $\mathcal{V}_m$ denotes the set of nodes of type-$m$ jobs, $\mathcal{E}$ represents the set of edges of the DAG, and $c_k$ denotes the data size (bits) of output type-$k$ task. Similar to the case of jobs modeled as chains, let the number of tasks of a type-$m$ job be $K_m$, i.e. $|\mathcal{V}_m| = K_m$, and the total number of task types in the network be $K$, and we index the task types in the network by $k$, $k \in [K]$, starting from job type $1$ to $M$. We call task $k'$ a parent of task $k$ if they belong to the same DAG, and $(k', k) \in \mathcal{E}_m$. Let $\mathcal{P}_k$ denote the set of parents of $k$. In order to process $k$, the processing of all the parents of $k$, $k' \in \mathcal{P}_k$, should be completed, and the results of the processed tasks should be all available in one server. We call task $k'$ a descendant of task $k$ if they belong to the same DAG,
and there is a directed path from $k$ to $k'$ in that DAG. 

In the rest of this section, we consider the network of dispersed computing platform to be the broadcast network where each server in the network always broadcasts the result of a processed task to other servers after that task is processed.

\subsection{Queueing Network Model for DAG Scheduling}\label{subsec:DAG_queue}
In this subsection, we propose a virtual queueing network that guarantees the output of processed tasks to be sent to the same server. Consider $M$ DAGs, $(\mathcal{V}_m,\mathcal{E}_m,\{c_k\}_{k \in \mathcal{V}_m})$, $m \in [M]$. We construct $M$ parallel networks of virtual queues by forming two kinds of virtual queues as follows:
\begin{enumerate}
   \item \textbf{Processing Queue}: 
      For each non-empty subset $\mathcal{S}_m$ of $\mathcal{V}_m$, $\mathcal{S}_m$ is a \emph{stage} of job $m$ if and only if for all $k \in \mathcal{S}_m$, all the descendants of $k$ are also in $\mathcal{S}_m$. For each stage of job $m$, we maintain one virtual queue. Also, a task $k$ in a stage is \textit{processable} if there are no parents of task $k$ in that stage.
    \item \textbf{Communication Queue}: For each server $j$ and each processable task $k$ in stage $\mathcal{S}_m$, we maintain one virtual queue which is indexed by $\mathcal{S}_m$ and $(k,j)$. 
\end{enumerate}
\begin{figure}[t]
    \centering
    \includegraphics[width=\columnwidth]{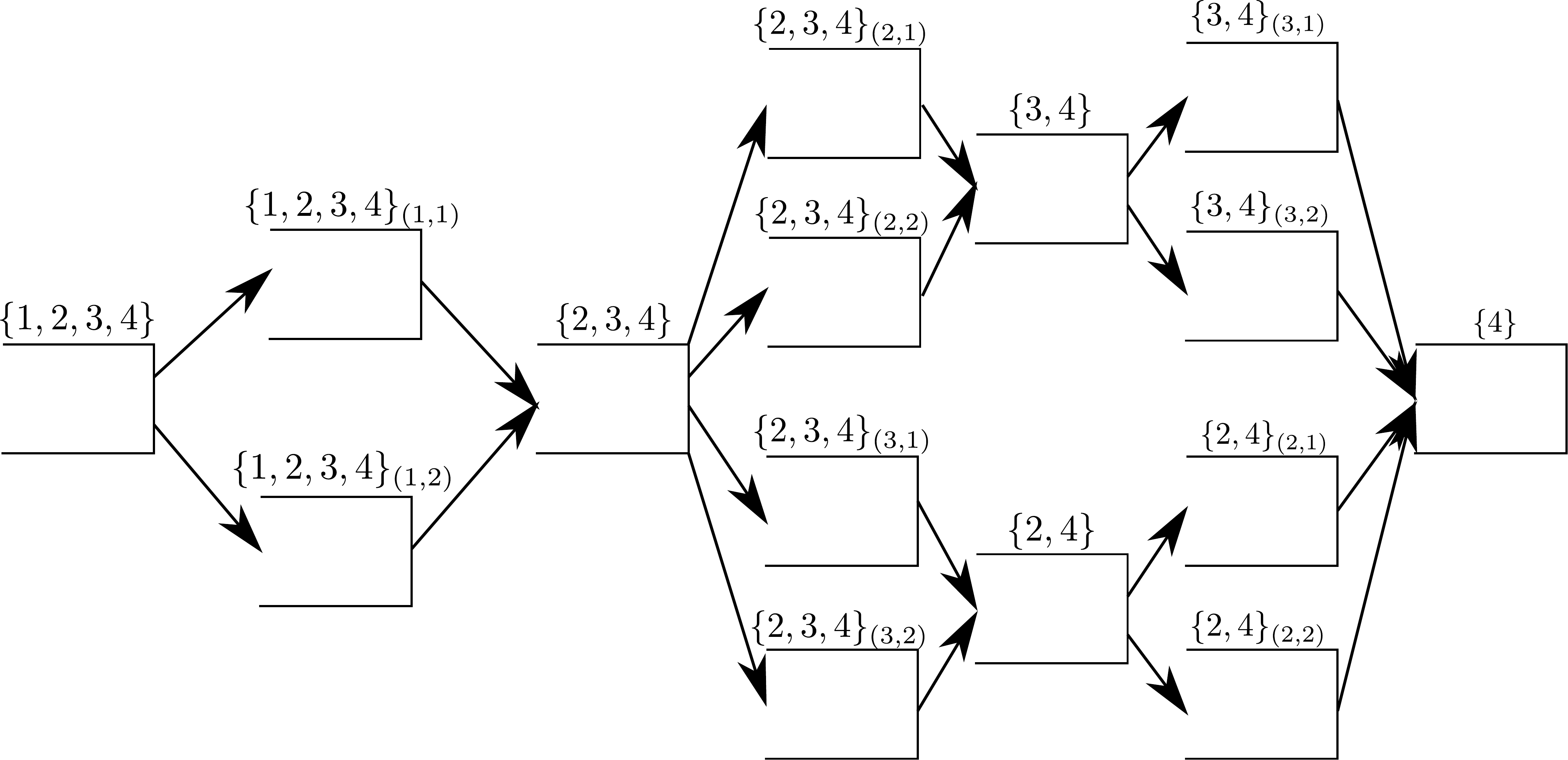}
    \caption{Queueing Network for the simple DAG in Fig. \ref{fig:DAG}.}\label{fig:DAG_queue}
\end{figure}
\begin{figure}[t]
   \centering
  \includegraphics[width=\columnwidth]{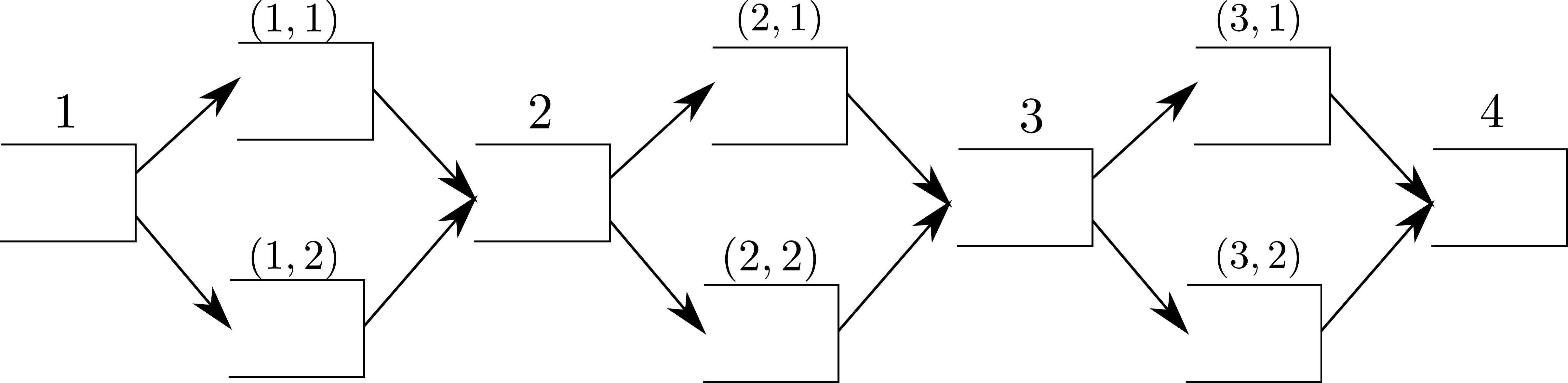}
  \caption{Queueing Network with Additional Precedence Constraints for DAG in Fig. \ref{fig:DAG}.}\label{fig:DAG_queue_constraint}
\end{figure}
\begin{example}
Consider a job specified by the DAG shown in Fig. \ref{fig:DAG} and a network consisting $J=2$ servers. We maintain the processing queues for each of $5$ possible stages of the job which are $\{1,2,3,4\}$, $\{2,3,4\}$, $\{2,4\}$, $\{3,4\}$ and $\{4\}$. Since task $1$ in stage $\{1,2,3,4\}$ is processable, we maintain communication queues $\{1,2,3,4\}_{(1,1)}$ and $\{1,2,3,4\}_{(1,2)}$ for server $1$ and server $2$ respectively. Similarly, we maintain communication queues $\{2,3,4\}_{(2,1)}$, $\{2,3,4\}_{(2,2)}$, $\{2,3,4\}_{(3,1)}$ and $\{2,3,4\}_{(3,2)}$ for stage $\{2,3,4\}$; communication queues $\{3,4\}_{(3,1)}$, $\{3,4\}_{(3,2)}$ for stage $\{3,4\}$; communication queues $\{2,4\}_{(2,1)}$, $\{2,4\}_{(2,2)}$ for stage $\{2,4\}$. The corresponding network of virtual queues is shown as Fig. \ref{fig:DAG_queue}. 
\end{example}
Now we describe the dynamics of the virtual queues in the network. When a new job $m$ comes to network, the job is sent to the processing queue corresponding to stage $\mathcal{S}_m=\mathcal{V}_m$ of job $m$. When server $j$ works on task $k$ in processing queue corresponding to subset $\mathcal{S}_m$, the result of the process is sent to the communication queue indexed by $\mathcal{S}_m$ and $(k,j)$ with rate $\mu_{(k,j)}$. When server $j$ broadcasts the result of processed task $k$ in the communication queue indexed by $\mathcal{S}_m$ and $(k,j)$, the result of the process is sent to the processing queue corresponding to subset $\mathcal{S}_m \backslash \{k\}$ with rate $\frac{b_j}{c_k}$. We call the action of processing task $k$ in processing queue corresponding to $\mathcal{S}_m$ as a \textit{processing activity}. Also, we call the action of broadcasting the output of processed task $k$ in communication queue as a \textit{communication activity}. We denote the collection of different processing activities and different communication activities in the network as $\mathcal{A}$ and $\mathcal{A}_c$ respectively. Let $A = |\mathcal{A}|$ and $A_c = |\mathcal{A}_c|$. Define the collection of processing activities that server $j$ can perform as $\mathcal{A}_j$, and the collection of communication activities that server $j$ can perform as $\mathcal{A}_{c,j}$. 

\begin{remark}
In general, the number of virtual queues corresponding to different stages of a job can grow exponentially with $K$ since each stage denotes a feasible subset of tasks. It can result in the increase of complexity of scheduling policies that try to maximize the throughput of the network. In terms of number of virtual queues, it is important to find a queueuing network with low complexity while resolving the problem of synchronization (see \cite{pedarsani2014scheduling} for more details) and guaranteeing the output of processed tasks to be sent to the same server.
\end{remark}
\begin{remark}
  To decrease the complexity of the queueing network, a queueing network with lower complexity can be formed by enforcing some additional constrains such that the DAG representing the job becomes a chain. As an example, if we force another constraint that task $3$ should proceed task $2$ in Fig. \ref{fig:DAG}, then the job becomes a chain of $4$ nodes with queueing network represented in Fig. \ref{fig:DAG_queue_constraint}. The queueing network of virtual queues for stages of the jobs has $K$ queues which largely decreases the complexity of scheduling policies for the DAG scheduling problem. 
\end{remark}
\subsection{Capacity Region}
Let $K'$ be the number of virtual queues for the network. For simplicity, we index the virtual queues by $k'$, $ k' \in [K']$. We define a drift matrix $D \in \mathbb{R}^{K' \times (A+A_c)}$ for $d_{(k',a)}$ where $d_{(k',a)}$ is the rate that virtual queue $k'$ changes if activity $a$ is performed. Define a length $K'$ arrival vector $\vec{e}(\vec{\lambda})$ such that $e_{k'}(\vec{\lambda}) = \lambda_m$ if virtual queue $k'$ corresponds to the first stage of jobs in which no tasks are yet processed, and $e_{k'}(\vec{\lambda})=0$ otherwise. Let $\vec{z} \in \mathbb{R}^{(A+A_c)}$ be the allocation vector of the activities $a \in \mathcal{A} \cup \mathcal{A}_c$. Similar to the capacity region introduced in the previous sections, we can introduce an optimization problem for the network that characterizes the capacity region. The optimization problem called \textit{broadcast planning problem (BPP)} is defined as follows:

\textbf{Broadcast Planning Problem (BPP):}
\begin{align}
    \text{Minimize} \quad  \eta \\
    \text{subject to} \quad & \vec{e}(\vec{\lambda})+D\vec{z} \leq \vec{0} \label{eq:incapacity}\\
    & \eta \geq \sum_{a \in \mathcal{A}_j} z_a, \ \forall \ j \in [J],\\
    & \eta \geq \sum_{a \in \mathcal{A}_{c,j}}z_a, \ \forall \ j \in [J],\\
    & \vec{z} \geq \vec{0}.
\end{align}
Based on BPP above, the capacity region of the network can be characterized by following proposition.
\begin{proposition}\label{optimal_BPP}
The capacity region $\Lambda^{''}$ of the virtual queueing network characterizes the set of all rate vectors $\vec{\lambda} \in \mathbb{R}^M_{+}$ for which the corresponding optimal solution $\eta^*$ to the broadcast planning problem (BPP) satisfies $\eta^* \leq 1$. In other words, \textit{capacity region} $\Lambda^{''}$ is characterized as follows 
\begin{align*}
   \Lambda^{''}  \triangleq \Bigg\{\vec{\lambda} \in \mathbb{R}^{M}_{+}: \exists \ \vec{z} \geq \vec{0} \ \text{such that} \ 1 \geq \sum_{a \in \mathcal{A}_j} z_a, \ \forall \  j, \\  1 \geq \sum_{a \in \mathcal{A}_{c,j}}z_a, \ \forall \ j, \ \text{and} \ \vec{e}(\vec{\lambda})+D\vec{z} \leq \vec{0}   \Bigg\}.
  \end{align*}
 \end{proposition}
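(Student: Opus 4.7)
The plan is to adapt the two-direction fluid-limit argument used in the proofs of Propositions \ref{optimal_LP} and \ref{optimal_QNPP} to the more compact formulation provided by the drift matrix $D$ and arrival vector $\vec{e}(\vec{\lambda})$. Concretely, I will show that $\eta^*\le 1$ is both necessary and sufficient for rate stability of the virtual queueing network of Section \ref{subsec:DAG_queue}. Let $X_{k'}(t)$ denote the fluid level in virtual queue $k'\in[K']$, and let $z_a(t)$ denote the (time-varying) fraction of resource allocation devoted to activity $a\in\mathcal{A}\cup\mathcal{A}_c$ under some scheduling policy. Then, using the interpretation of the drift matrix, the fluid dynamics take the compact form
\begin{equation}
\dot{X}_{k'}(t)=e_{k'}(\vec{\lambda})+\sum_{a\in\mathcal{A}\cup\mathcal{A}_c} d_{(k',a)}\,z_a(t),
\end{equation}
subject to the server capacity and bandwidth constraints $\sum_{a\in\mathcal{A}_j}z_a(t)\le 1$ and $\sum_{a\in\mathcal{A}_{c,j}}z_a(t)\le 1$ for every $j\in[J]$.

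For necessity, I would assume that there exists a scheduling policy making the network rate stable, so that $X_{k'}(t)=0$ for all $t\ge 0$ whenever $X_{k'}(0)=0$. At any regular point $t_1$, differentiability gives $\dot{X}_{k'}(t_1)=0$ for every $k'$. The instantaneous allocations $\vec{z}(t_1)$ then form a feasible point for BPP satisfying $\vec{e}(\vec{\lambda})+D\vec{z}(t_1)=\vec{0}$ together with the per-server upper bounds by $1$. Hence the BPP is feasible with objective value at most $1$, which yields $\eta^*\le 1$. The contrapositive gives weak instability whenever $\eta^*>1$, following the same template as in Proposition \ref{optimal_LP}.

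For sufficiency, I would take the optimal BPP solution $(\eta^*,\vec{z}^{\,*})$ with $\eta^*\le 1$ and construct a generalized head-of-the-line processor sharing policy that instructs each server $j$ to devote a long-run fraction $z^{*}_a$ of its processing capacity to each processing activity $a\in\mathcal{A}_j$ and a long-run fraction $z^{*}_a$ of its bandwidth to each communication activity $a\in\mathcal{A}_{c,j}$. The two capacity-type constraints of BPP, combined with $\eta^*\le 1$, guarantee these fractions are jointly admissible. Substituting $\vec{z}^{\,*}$ into the fluid dynamics and invoking the drift inequality $\vec{e}(\vec{\lambda})+D\vec{z}^{\,*}\le\vec{0}$ gives $\dot{X}_{k'}(t)\le 0$ for every $k'$ and every $t>0$. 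Therefore, starting from $\vec{X}(0)=\vec{0}$ one has $\vec{X}(t)=\vec{0}$ for all $t\ge 0$, so the fluid model is weakly stable and by \cite{dai1995positive} the network is rate stable.

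The main obstacle I anticipate is verifying that the activity-based construction in the sufficiency step is genuinely implementable as an online policy on the broadcast network, rather than only as an abstract fluid allocation. In particular, because a single processing activity simultaneously decreases a stage-level processing queue and increases a communication queue indexed by the same stage and server, and a single communication activity simultaneously decreases that communication queue and increases the processing queue for the reduced stage $\mathcal{S}_m\setminus\{k\}$, one must confirm that the coupled drifts encoded in the columns of $D$ are consistent with the broadcast semantics of Section \ref{subsec:DAG_queue} and that splitting $z^{*}_a$ among servers does not reintroduce the logic-dependency difficulty; the broadcast assumption is exactly what allows the per-server fractions to be realized independently, which closes the argument.
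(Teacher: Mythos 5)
Your proposal is correct and follows essentially the same two-direction fluid-limit argument the paper sketches for Proposition~\ref{optimal_BPP} (necessity via regular points at which $\dot{X}_{k'}(t_1)=0$ yields a feasible BPP point with $\eta\le 1$; sufficiency via a processor-sharing policy at $\vec{z}^{\,*}$, using that $\vec{e}(\vec{\lambda})+D\vec{z}^{\,*}\le\vec{0}$ directly forces $\dot{X}_{k'}(t)\le 0$). One observation worth noting: because the BPP already encodes drift directly through $D\vec{z}$ rather than through recursively-defined nominal rates as in QNPP, your sufficiency step correctly avoids the $\epsilon_{(k,j)}/y[\cdot]$ bookkeeping the paper alludes to with ``similar to (\ref{eq:opt_rkj}) and (\ref{eq:opt_rkjc})''; your direct substitution is the cleaner route and is exactly what the paper itself uses in the proof of Theorem~\ref{optimal_broadcast}.
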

 The proof of Proposition \ref{optimal_BPP} is similar to Proposition \ref{optimal_QNPP}. Due to the space limit, we only provide the proof sketches. Consider the virtual queueing network in the fluid limit. Suppose $\eta^*>1$. We assume that there exits a scheduling policy such that under that policy the virtual queueing network is weakly stable. Then, one can obtain a solution such that $\eta \leq 1$ which contradicts $\eta^*>1$. On the other hand, suppose $\eta^*\leq 1$. Similar to (\ref{eq:opt_rkj}) and (\ref{eq:opt_rkjc}), we can find a feasible allocation vector $\vec{z}$ such that the derivative of each queue's fluid level is not greater than $0$ which implies the network is weakly stable.
 \begin{remark}
Additional precedence constraints do not result in loss of throughput. Given a $\vec{\lambda} \in \Lambda^{''}$ with corresponding allocation vector $\vec{z}$, one can construct a feasible allocation vector $\vec{z}^{\,'}$ for the virtual queueing network based on additional precedence constraints: For each task $k$ and each server $j$, we choose the allocation of processing activity that task $k$ is processed by server $j$ to be the sum of all allocations which correspond to task $k$ and server $j$ in $\vec{z}$. For communication activity, it can be done in a similar argument. However, this serialization technique could increase the job latency (delay) as it de-parallelizes computation tasks. Also, the gap could be huge if the original DAG of job has a large number of tasks stemming from a common parent node. 
\end{remark}
\subsection{Throughput-Optimal Policy}
Now, we propose Max-Weight policy for the queueing network and show that it is throughput-optimal. Given virtual queue-lengths $Q^n_{k'}$ at time $n$, Max-Weight policy allocates a vector $\vec{z}$ that is
\begin{align}
&\arg \max_{\vec{z} \ \text{are feasible}} -(\vec{Q}^n)^{T} E[\Delta \vec{Q}^n|\mathcal{F}^n]\\
= & \arg \max_{\vec{z} \ \text{are feasible}} -(\vec{Q}^n)^{T}D\vec{z}
\end{align}
where $\vec{Q}^n $ is the vector of queue-lengths $Q^n_{k'}$ at time $n$. Next, we state the following theorem for the throughput-optimality of Max-Weight Policy.
 \begin{theorem}
 \label{optimal_broadcast}
 Max-Weight policy is throughput-optimal for the queueing network proposed in Subsection \ref{subsec:DAG_queue}.
 \end{theorem}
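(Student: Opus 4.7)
The plan is to mirror the fluid-limit Lyapunov argument used for Theorem \ref{optimal}, but now working directly with the activity-based formulation of BPP rather than with the per-queue decomposition used for chains. Because Proposition \ref{optimal_BPP} already characterizes $\Lambda''$ through the drift matrix $D$ and the feasibility constraints on the activity vector $\vec{z}$, I would not need an analogue of Lemma \ref{equivalence}; throughput-optimality can be established for the virtual queueing network of Subsection \ref{subsec:DAG_queue} on its own terms.

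First I would introduce the fluid level $\vec{X}(t)\in\mathbb{R}^{K'}_{+}$ of the virtual queue vector $\vec{Q}^{\,n}$ and write the fluid dynamics as $\dot{\vec{X}}(t)=\vec{e}(\vec{\lambda})+D\vec{z}(t)$, where $\vec{z}(t)$ is the (time-averaged) fraction of each activity used at time $t$, subject to $\sum_{a\in\mathcal{A}_j}z_a(t)\le 1$ and $\sum_{a\in\mathcal{A}_{c,j}}z_a(t)\le 1$ for every server $j$. I would then take the standard quadratic Lyapunov function $V(t)=\tfrac{1}{2}\vec{X}^{T}(t)\vec{X}(t)$, so that $\dot{V}(t)=\vec{X}^{T}(t)\,\vec{e}(\vec{\lambda})+\vec{X}^{T}(t)D\vec{z}(t)$. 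The definition of Max-Weight is exactly the rule that at each regular time $t$ the activity vector $\vec{z}_{\mathrm{MW}}(t)$ solves $\min_{\vec{z}\text{ feasible}}\vec{X}^{T}(t)D\vec{z}$, so for any other feasible allocation $\vec{z}^{\,*}$ we have
\begin{equation*}
\dot{V}_{\mathrm{MW}}(t)\le \vec{X}^{T}(t)\bigl(\vec{e}(\vec{\lambda})+D\vec{z}^{\,*}\bigr).
\end{equation*}

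Second, I would plug in the allocation coming from Proposition \ref{optimal_BPP}. For $\vec{\lambda}$ in the interior of $\Lambda''$, the optimal $\eta^{*}$ of BPP satisfies $\eta^{*}<1$, so there exists $\vec{z}^{\,*}\ge\vec{0}$ with $\vec{e}(\vec{\lambda})+D\vec{z}^{\,*}\le\vec{0}$ and strict slack in all server capacity constraints; after a small perturbation one can obtain a feasible $\vec{z}^{\,*}$ with $\vec{e}(\vec{\lambda})+D\vec{z}^{\,*}\le -\epsilon\vec{1}$ for some $\epsilon>0$. This gives $\dot{V}_{\mathrm{MW}}(t)\le -\epsilon\|\vec{X}(t)\|_1<0$ whenever $\vec{X}(t)\neq\vec{0}$, so the fluid model is stable and, by the standard result of Dai \cite{dai1995positive} used already in the proof of Theorem \ref{optimal}, the underlying Markov chain is positive recurrent. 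For $\vec{\lambda}\in\Lambda''$ on the boundary, $\eta^{*}\le 1$ yields a feasible $\vec{z}^{\,*}$ with $\vec{e}(\vec{\lambda})+D\vec{z}^{\,*}\le\vec{0}$, so $\dot{V}_{\mathrm{MW}}(t)\le 0$; starting from $\vec{X}(0)=\vec{0}$ we get $\vec{X}(t)=\vec{0}$ for all $t\ge 0$, i.e.\ weak stability and therefore rate stability.

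The main obstacle I anticipate is bookkeeping rather than conceptual: the drift matrix $D$ couples many queues per activity (a single broadcast activity in $(k,j)$ simultaneously depletes one communication queue and feeds $J$ processing queues corresponding to stage $\mathcal{S}_m\setminus\{k\}$, and a processing activity feeds a communication queue while draining a processing queue), so care is needed to verify that the comparison $\vec{X}^{T}D\vec{z}_{\mathrm{MW}}\le\vec{X}^{T}D\vec{z}^{\,*}$ holds with the correct signs and that the two server-side capacity constraints (processing and broadcasting, which are decoupled as in the chain model) are respected simultaneously by the Max-Weight minimizer. Once the drift matrix is written out explicitly and the feasibility polytope is shown to be exactly the one appearing in BPP, the Lyapunov argument closes as in Theorem \ref{optimal}, yielding throughput-optimality.
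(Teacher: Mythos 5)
Your proposal is correct and follows essentially the same route as the paper: work directly in the fluid limit of the virtual queueing network, use the quadratic Lyapunov function $V(t)=\tfrac12\vec{X}^T(t)\vec{X}(t)$, exploit the Max-Weight minimization of $\vec{X}^T(t)D\vec{z}$ over the feasible activity polytope, and compare against the BPP allocation $\vec{z}^{\,*}$ to conclude $\dot V_{\mathrm{MW}}(t)\le 0$ (hence weak stability/rate stability) for $\vec{\lambda}\in\Lambda''$ and $\dot V_{\mathrm{MW}}(t)<0$ (hence positive recurrence) for $\vec{\lambda}$ in the interior. You correctly observe that no analogue of Lemma \ref{equivalence} is needed here, and your explicit perturbation step to obtain a uniformly negative drift is a slightly more careful version of what the paper asserts directly.
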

\begin{proof}
We consider the problem in the fluid limit. Define the amount of fluid in virtual queue $k'$ as $X_{k'}(t)$. The dynamics of the fluid are as follows:
\begin{equation}
    \vec{X}(t) = \vec{X}(0) + \vec{e}(\vec{\lambda})t + D\vec{T}(t),
\end{equation}
 where $\vec{X}(t)$ is the vector of queue-lengths $X_{k'}(t)$, $T_a(t)$ is the total time up to $t$ that activity $a$ is performed, and $\vec{T}(t)$ is the vector of total service times of different activities $T_a(t)$. By Max-weight policy, we have
 \begin{equation}
  \dot{\vec{T}}_{\text{Max-Weight}}(t) = \arg \min_{\vec{z} \text{ is feasible}} \vec{X}^T(t)D\vec{z}.
 \end{equation}
 Now, we take $V(t)=\frac{1}{2}\vec{X}^T\vec{X}$ as the Lyapunov function. The drift of $V(t)$ by using Max-Weight policy is
 \begin{align*}
     \dot{V}_{\text{Max-Weight}}(t) & = \vec{X}^T(t)(\vec{e}(\vec{\lambda})+D\vec{\dot{T}}_{\text{Max-Weight}}(t))\\
    & =\vec{X}^T(t)\vec{e}(\vec{\lambda})+\min_{\vec{z}\text{ is feasible}}\vec{X}^T(t)D\vec{z}\\
    & \leq \vec{X}^T(t)(\vec{e}(\vec{\lambda})+D\vec{z}^{\,*})
 \end{align*}
 where $\vec{z}^{\,*}$ is a feasible allocation vector. If $\vec{\lambda} \in \Lambda''$, then $\dot{V}(t) \leq 0$ which is directly from (\ref{eq:incapacity}). That is if $\vec{X}(0)=\vec{0}$, then $\vec{X}(t)=\vec{0}$ for all $t \geq 0$ which implies that the fluid model is weakly stable, i.e. the queueing network is rate stable \cite{dai1995positive}. 
 
 If $\vec{\lambda}$ is in the interior of capacity region $\Lambda''$, we have
 \begin{align}
    \dot{V}_{\text{Max-Weight}}(t) \leq \vec{X}^T(t)(\vec{e}(\vec{\lambda})+D\vec{z}^{\,*}) < \vec{0}, 
 \end{align}
 which proves that the fluid model is stable which implies the positive recurrence of the underlying Markov chain \cite{dai1995positive}.
\end{proof}

\section{Conclusion}
In this paper, we consider the problem of communication-aware dynamic scheduling of serial tasks for dispersed computing, motivated by significant communication costs in dispersed computing networks. We characterize the capacity region of the network and propose a novel network of virtual queues encoding the state of the network. Then, we propose a Max-Weight type scheduling policy, and show that the policy is throughput-optimal through a Lyapunov argument by considering the virtual queueing network in the fluid limit. Lastly, we extend our work to communication-aware DAG scheduling problem under a broadcast network where servers always broadcast the output of processed tasks to other servers. We propose a virtual queueing network encoding the state of network which guarantees the results of processed parents tasks are sent to the same server for processing child task, and show that the Max-Weight policy is throughput-optimal for the broadcast network. Some future directions are to characterize the delay properties of the proposed policy, develop robust scheduling policies that are oblivious to detailed system parameters such as service rates, and  develop low complexity and throughput-optimal policies for DAG scheduling. 

Beyond these directions, another future research direction is to consider communication-aware task scheduling when \textit{coded computing} is also allowed. 
Coded computing is a recent technique that enables optimal tradeoffs between computation load, communication load, and computation latency due to stragglers in distributed computing (see, e.g., ~\cite{li2018fundamental,li2017coding,lee2018speeding,dutta2016short,reisizadeh2019coded}). Therefore, designing joint task scheduling and coded computing in order to leverage tradeoffs between computation, communication, and latency could be an interesting problem (e.g., \cite{yang2019timely}).

\bibliographystyle{ieeetr}
\bibliography{references.bib}
\appendices
\section{Proof of Lemma \ref{equivalence}}\label{appendix:lemma1}
First, consider a vector $\vec{\lambda} \in \Lambda^{'}$. There exist feasible allocation vectors $\vec{p}$, $\vec{q}$, $\vec{u}$, $\vec{s}$ and $\vec{w}$ such that
\begin{align}
& \mu_{(k,j)}p_{(k,j)} = r_{(k,j)}, \ \forall \ j,\ \forall \ k. \\ 
& \frac{b_jq_{(k,j)}}{c_k} = r_{(k,j),c}=r_{(k,j)}(1-s_{k,j \rightarrow j}),\ \forall \ j, \ \forall \ k\in [K] \backslash \mathcal{H}.
\end{align}
Now, we focus on job $m$ specified by a chain and compute $\sum^J_{j=1}r_{(k,j)}$. If $k \in \mathcal{C}$, we have
\begin{align*}
    \sum^J_{j=1}r_{(k,j)} = \sum^J_{j=1}\lambda_{m(k)}u_{m \rightarrow j} = \lambda_{m(k)} \sum^J_{j=1}u_{m \rightarrow j}= \lambda_{m(k)}
\end{align*}
since $\sum^J_{j=1}u_{m \rightarrow j}=1$. Then, we can compute $\sum^J_{j=1} r_{(k+1,j)}$ as follows:
\begin{align}
\sum^J_{j=1}r_{(k+1,j)} & = \sum^J_{j=1} \sum^J_{l=1} r_{(k,l)} f_{k,l \rightarrow j} =\sum^J_{l=1} r_{(k,l)} \sum^J_{j=1} f_{k,l \rightarrow j} \nonumber\\
& = \sum^J_{l=1} r_{(k,l)}=\lambda_{m(k)}
\end{align}
since $\sum^J_{j=1} f_{k,l \rightarrow j}=1$. By induction, we have $\sum^J_{j=1}r_{(k,j)} = \lambda_{m(k)}$, $\forall k \in \mathcal{I}_m$. Then, we have
\begin{align}
    \sum^J_{j=1}r_{(k,j)} = \sum^J_{j=1}\mu_{(k,j)}p_{(k,j)},
\end{align}
which concludes $\nu_k(\vec{\lambda}) = \lambda_{m(k)} = \sum^J_{j=1}\mu_{(k,j)}p_{(k,j)}$ for all $k$. For $\forall k \in [K] \backslash \mathcal{H}$ and all $j$, we can write
\begin{align}
\frac{b_jq_{(k,j)}}{c_k} & = r_{(k,j),c} \\
& = r_{(k,j)}(1-s_{k,j \rightarrow j})\\
& = r_{(k,j)}-r_{(k,j)}f_{k,j \rightarrow j}\\
& \geq r_{(k,j)}-\sum^J_{l=1}r_{(k,l)}f_{k,l \rightarrow j}\\
& = r_{(k,j)} - r_{(k+1,j)}\\
& = \mu_{(k,j)}p_{(k,j)} - \mu_{(k+1,j)}p_{(k+1,j)}.
\end{align}
Thus, $\Lambda^{'} \subseteq \Lambda$.

Now, we consider a rate vector $\vec{\lambda} \in \Lambda$. There exist allocation vectors $\vec{p}$ and $\vec{q}$ such that $\nu_k(\vec{\lambda})=\sum^{J}_{j=1}\mu_{(k,j)}p_{(k,j)}$, $\forall \ k $; and $\frac{b_jq_{(k,j)}}{c_k} \geq \mu_{(k,j)}p_{(k,j)} - \mu_{(k+1,j)}p_{(k+1,j)}$, $\forall \ j$, $\forall \ k \in [K] \backslash \mathcal{H}$. For QNPP, for $\forall \ m \in [M]$, one can simply choose $u_{m \rightarrow j}$ as follows:
\begin{align}
    u_{m \rightarrow j} = \frac{\mu_{(k,j)}p_{(k,j)}}{\nu_k(\lambda)}, \ \forall \ j,
\end{align}
where $k$ is the root node of job $m$. For $k \in [K] \backslash \mathcal{H}$, we denote
\begin{equation}
\mathcal{D}_k = \{j:\mu_{(k,j)}p_{(k,j)}-\mu_{(k+1,j)}p_{(k+1,j)} <0,\ \forall \ j \}. 
\end{equation}
Then, for $\ k \in [K] \backslash \mathcal{H}$, we choose $s_{k,j \rightarrow j}$ as follows:
\begin{align}
    s_{k,j \rightarrow j} = 
    \begin{cases}
     1 & \text{if}\ j \in \mathcal{D}_k\\
     \frac{\mu_{(k+1,j)}p_{(k+1,j)}}{\mu_{(k,j)}p_{(k,j)}} & \text{if} \ j \notin \mathcal{D}_k\\
     \end{cases}
\end{align}
For $j \in \mathcal{D}_k$, we choose $w_{k,j \rightarrow l}$ to be any feasible value such that
\begin{align}
\sum_{l \in [J] \backslash\{j\}}w_{k,j \rightarrow l} =1.
\end{align}
For $j \notin \mathcal{D}_k$, we choose $w_{k,j \rightarrow l}$ as follows:
\begin{align}
   \label{eq:s_kjl}
    w_{k,j \rightarrow l} = 
    \begin{cases}
     \frac{\mu_{(k+1,l)}p_{(k+1,l)}-\mu_{(k,l)}p_{(k,l)}}{\sum_{l \in \mathcal{D}_k}\mu_{(k+1,l)}p_{(k+1,l)}-\mu_{(k,l)}p_{(k,l)}} & \text{if} \ l \in \mathcal{D}_k\\
     0 &  \text{if} \ l \notin \mathcal{D}_k. \\
     \end{cases}
\end{align}
One can easily check that $\vec{s}$ and $\vec{w}$ are feasible.
%\begin{align}
 %   0 \leq s_{k,j \rightarrow j} \leq 1, \ & \forall \ j, \ \forall \ k \in [K] \backslash \mathcal{H};\\
  %  \sum_{l \in [J] \backslash\{j\}}w_{k,j \rightarrow l} =1, \ & \forall \ j, \ \forall \ k \in [K] \backslash \mathcal{H}.
%\end{align}
Based on the feasible vectors $\vec{u}$, $\vec{s}$ and $\vec{w}$ stated above, we can compute $r_{(k,j)}$. Let's focus on job $m$ and compute nominal rate $r_{(k,j)}$. If $k \in \mathcal{C}$, we have
\begin{align}
r_{(k,j)} = \nu_k(\lambda) u_{m \rightarrow j} = \mu_{(k,j)}p_{(k,j)}, \ \forall \ j. 
\end{align}
Then, we can compute $r_{(k+1,j)}$ in the following two cases:

\textbf{Case 1:} $j \in \mathcal{D}_k$: We compute $r_{(k+1,j)}$ as
\begin{flalign}
& r_{(k+1,j)} =  \sum^J_{l=1} r_{(k,l)}f_{k,l \rightarrow j} & \\
= & \mu_{(k,j)}p_{(k,j)}f_{k,j \rightarrow j}+\sum_{l \in [J] \backslash \{j\}} \mu_{(k,l)}p_{(k,l)}f_{k,l \rightarrow j} &\\
= & \mu_{(k,j)}p_{(k,j)}s_{k,j \rightarrow j}+\sum_{l \in [J] \backslash \{j\}} \mu_{(k,l)}p_{(k,l)}(1-s_{k,l \rightarrow l})w_{k,l \rightarrow j}\\
= & \mu_{(k,j)}p_{(k,j)}+ \sum_{l\notin \mathcal{D}_k}\mu_{(k,l)}p_{(k,l)}(1-\frac{\mu_{(k+1,l)}p_{(k+1,l)}}{\mu_{(k,l)}p_{(k,l)}}) \nonumber\\
& \times \frac{\mu_{(k+1,j)}p_{(k+1,j)}-\mu_{(k,j)}p_{(k,j)}}{\sum_{l \in \mathcal{D}_k}\mu_{(k+1,l)}p_{(k+1,l)}-\mu_{(k,l)}p_{(k,l)}}\\
= & \mu_{(k,j)}p_{(k,j)}+(\mu_{(k+1,j)}p_{(k+1,j)}-\mu_{(k,j)}p_{(k,j)}) \nonumber \\
& \times \frac{\sum_{l\notin \mathcal{D}_k}\mu_{(k,l)}p_{(k,l)} - \mu_{(k+1,l)}p_{(k+1,l)}}{\sum_{l \in \mathcal{D}_k}\mu_{(k+1,l)}p_{(k+1,l)}-\mu_{(k,l)}p_{(k,l)}}\\
= & \mu_{(k,j)}p_{(k,j)}+(\mu_{(k+1,j)}p_{(k+1,j)}-\mu_{(k,j)}p_{(k,j)})\\
= & \mu_{(k+1,j)}p_{(k+1,j)}
\end{flalign}
using the fact $\nu_k(\lambda) = \nu_{k+1}(\lambda)$, i.e. $\sum^{J}_{j=1}\mu_{(k,j)}p_{(k,j)} = \sum^{J}_{j=1}\mu_{(k+1,j)}p_{(k+1,j)}$.

\textbf{Case 2:} $j \notin \mathcal{D}_k$: We compute $r_{(k+1,j)}$ as
\begin{flalign}
& r_{(k+1,j)} = \sum^J_{l=1} r_{(k,l)}f_{k,l \rightarrow j} \\
= & \mu_{(k,j)}p_{(k,j)}f_{k,j \rightarrow j}+\sum_{l \in [J] \backslash \{j\}} \mu_{(k,l)}p_{(k,l)}f_{k,l \rightarrow j}\\
= & \mu_{(k,j)}p_{(k,j)}s_{k,j \rightarrow j} + \sum_{l \in [J] \backslash \{j\}} \mu_{(k,l)}p_{(k,l)}(1-s_{k,l \rightarrow l})w_{k,l \rightarrow j}\\
= & \mu_{(k+1,j)}p_{(k+1,j)}
\end{flalign}
since $s_{k,l \rightarrow l} =1$ for $l \in \mathcal{D}_k$ and $w_{k,l \rightarrow j}=0$ for $l,j \notin \mathcal{D}_k$ and $l \neq j$. Similarly, we can obtain $r_{(k,j)}=\mu_{(k,j)}p_{(k,j)}$ for $\forall \ k \in \mathcal{I}_m$. Now, for $k \in [K] \backslash \mathcal{H}$, we can compute $r_{(k,j),c}$. There are two cases as follows:

\textbf{Case 1:} $j \in \mathcal{D}_k$: We compute $r_{(k,j),c}$ as
\begin{align}
r_{(k,j),c} = & r_{(k,j)}(1-s_{k,j \rightarrow j})\\
= & \mu_{(k,j)}p_{(k,j)}(1-s_{k,j \rightarrow j}) = 0
\end{align}
since $s_{k,j \rightarrow j}=1$ for $j \in \mathcal{D}_k$. Therefore, $\frac{b_jq_{(k,j)}}{c_k} \geq 0 = r_{(k,j),c}$.

\textbf{Case 2:} $j \notin \mathcal{D}_k$: We compute $r_{(k,j),c}$ as
\begin{align}
r_{(k,j),c} = & r_{(k,j)}(1-s_{k,j \rightarrow j})\\
= & \mu_{(k,j)}p_{(k,j)} (1 - \frac{\mu_{(k+1,j)}p_{(k+1,j)}}{\mu_{(k,j)}p_{(k,j)}})\\
= & \mu_{(k,j)}p_{(k,j)} - \mu_{(k+1,j)}p_{(k+1,j)}
\end{align}
since $s_{k,j \rightarrow j}=\frac{\mu_{(k+1,j)}p_{(k+1,j)}}{\mu_{(k,j)}p_{(k,j)}}$ for $j \notin \mathcal{D}_k$. Then, we have $\frac{b_jq_{(k,j)}}{c_k} \geq \mu_{(k,j)}p_{(k,j)} - \mu_{(k+1,j)}p_{(k+1,j)} =r_{(k,j),c}$. Thus, $\Lambda \subseteq \Lambda^{'}$ which completes the proof.

\end{document}